\newtheorem{theorem}{Theorem}
\theoremstyle{plain}
\newtheorem{axiom}{Axiom}
\newtheorem{definition}{Definition}
\newtheorem{lemma}{Lemma}
\newtheorem{proposition}{Proposition}
\numberwithin{equation}{section}
\begin{document}
\title[Complex bodies with memory effects]{\textbf{Complex bodies with
memory: linearized setting}}
\author{Paolo Maria Mariano}
\address{DICeA, University of Florence, via Santa Marta 3, I-50139 Firenze
(Italy)}
\email{paolo.mariano@unifi.it}
\author{Paolo Paoletti}
\address{DSI, University of Florence, via Santa Marta 3, I-50139 Firenze
(Italy)}
\email{paolo.paoletti@dsi.unifi.it}
\date{May 16th, 2008}
\subjclass[2000]{Primary 74A30; Secondary 74A20, 74A60}
\keywords{Complex bodies, memory effects, multifield theories, relaxed work,
free energy}

\begin{abstract}
The mechanics of complex bodies with memory effects is discussed in
linearized setting. The attention is focused on the characterization of free
energies in terms of minimum work and maximum recoverable work in the bulk
and along a discontinuity surface endowed with its own surface energy, a
surface internal to the body. To this aim, use is made of techniques
proposed by Del Piero. Consequences of the Clausius-Duhem inequality are
investigated for complex bodies with instantaneous linear elastic response.
\end{abstract}

\maketitle

\section{Introduction}

Some materials display sensibility to the past history of their present
state: they are called the materials with memory. A paradigmatic (although
special) example of description of hereditary behavior is the standard
viscoelasticity, even in small strain regime, for example the one described
by the rheological models by Newton, Maxwell or Kelvin. The pioneering work
by Volterra \cite{Vol1, Vol2} opened the way to the analysis of complicated
constitutive structures built up on the histories of state variables
(basically their graphs in time).

The topic received careful attention in the late 1950s, the subsequent
decade and more. Researches were developed with the aim of establishing a
theory of linear and non-linear behavior of materials with memory,
especially in the case of fading memory \cite{KM, CN, CNbis, Bernardo,
Beenardo2}. All these works generated specific developments and general
effort toward ordering of the existing results in a clear and systematic
way. The articles on this topic are manifold and it is difficult to list all
of them. Essential examples are \cite{Fichera, CoMi, CoMi1, CO, GuHe,
LeiFis, D, Daf2, Daf3, Daf4, Day1, Day2, Day3, Ant, DP, DPD, FGG, FGM, FM,
McT, RHN}, all dealing with the description of memory effects in Cauchy
bodies that are those bodies the morphology of every material element of
which is described by the sole place that its centre of mass occupies in
space.

Essential questions have been tackled in the representation of the
mechanical behavior of the materials with memory. Some of them have
foundational nature: (\emph{i}) the meaning to be given to the notion of
state, (\emph{ii}) the definition of appropriate free energies, (\emph{iii})
the nature of the chain rule when functionals of histories are called upon, (%
\emph{iv}) the correct use of Clausius-Duhem inequality to find a priori
constitutive restrictions.

In particular, in linear viscoelasticity items (\emph{i}) and (\emph{ii})
have been successfully tackled in \cite{DP} and \cite{DPD}: viscoelastic
material elements of Cauchy bodies have been described as thermodynamic
systems in the sense of \cite{CO1}.

Here, techniques proposed by Del Piero in \cite{DP} are used extensively to
address the mathematical modelling of the behavior of \emph{complex bodies
with memory} in the general model-building framework of the mechanics of
complex bodies (different analyses on the specific case of micromorphic
materials with memory are available in \cite{Erin, IesSca}).

Bodies are called complex when changes in the molecular or crystalline
texture at various microscopic scales (substructure) influence the
macroscopic behavior through peculiar actions. Examples are manifold: liquid
crystals, ferroelectrics, quasicrystals, nematic elastomers,
magnetostrictive solids etc. Although all these examples are referred to a
variety of phenomena, common essential features can be referred to a unique
abstract model-building framework for the mechanics of complex bodies (see 
\cite{C89, M02, M08}). Such a framework unifies in a single format the
existing models of special classes of complex bodies and is a flexible tool
for analyzing new materials.

Some complex bodies, as for example relaxor ferroelectrics \cite{Coh, VuR},
exhibit memory of substructural events. Such a circumstance motivates the
analysis of the mechanics and thermodynamics of bodies with memory. The
setting is the linear one for the sake of simplicity. The representation of
the substructural morphology of bodies is maintained abstract in order to
include a number of special cases as large as possible.

Free energies are characterized in terms of minimum work and maximum
recoverable work in the bulk and along a discontinuity surface endowed with
its own energy. Consequences of the Clausius-Duhem inequality are
investigated for complex bodies with instantaneous linear elastic response.

\section{Kinematics of complex bodies}

\subsection{Generalities}

A regular region $\mathcal{B}$ in the ambient space $\mathbb{R}^{3}$ -
regular in the sense that it is a `fit region' or, more simply, an open set
with Lipschitz boundary - is selected to host a body in its macroscopic
reference configuration. Every point $x$ from $\mathcal{B}$ represents a
material element. It is assumed that subsequent configurations are achieved
by means of differentiable bijections (transplacements) from $\mathcal{B}$
to a copy $\mathbb{\hat{R}}^{3}$\ of the ambient space, obtained by means of
an isomorphism $i:\mathbb{R}^{3}\rightarrow \mathbb{\hat{R}}^{3}$. Maps $%
x\longmapsto y:=y\left( x\right) \in \mathbb{\hat{R}}^{3}$, $x\in \mathcal{B}
$, are then defined. Their spatial derivative is indicated by $F:=Dy\left(
x\right) \in Hom\left( T_{x}\mathcal{B},T_{y\left( x\right) }\mathcal{B}%
_{a}\right) $, with $\mathcal{B}_{a}:=y\left( \mathcal{B}\right) $, and is
such that $\det F>0$. Comparison between the metric $g$ in the actual shape
in $\mathcal{B}_{a}$ and the natural metric $\gamma $ in $\mathcal{B}$
allows one to measure crowding and shearing of material elements. The tensor 
$E:=\frac{1}{2}\left( y^{\#}g-\gamma \right) $, where $y^{\#}g$ is the
pull-back of $g$ through $y$\ given in components by $\left( y^{\#}g\right)
_{RS}=F_{R}^{i}g_{ij}F_{S}^{j}$, is then a true measure of deformation: it
vanishes under global rigid transplacements. Motions are then time
parametrized families of transplacements:%
\begin{equation*}
\left( x,t\right) \longmapsto y:=y\left( x,t\right) \in \mathbb{\hat{R}}^{3},%
\text{ \ \ }x\in \mathcal{B},\text{ }t\in \lbrack 0,d].
\end{equation*}%
Sufficient smoothness in time $t$ is presumed. The macroscopic velocity is
defined by $\dot{y}:=\frac{d}{dt}y\left( x,t\right) $ in the referential
description (that is as a field over the tube $\mathcal{B}\times \lbrack
0,d] $). The standard kinematics of deformable bodies is represented this
way. No geometrical information on the material texture at scales lower than
the macroscopic one (substructure) is commonly added in the description of
the morphology of the body under scrutiny.

When materials display acute sensibility to phenomena at minute scales, a
representation of such phenomena, combined with the description of the
macroscopic behavior is required. The standard kinematics is then enriched.
A descriptor of the material texture (a morphological descriptor of the
substructure) is assigned to each point. It is selected in a set $\mathcal{M}
$. A field%
\begin{equation*}
\left( x,t\right) \longmapsto \nu :=\nu \left( x,t\right) \in \mathcal{M},\
\ x\in \mathcal{B},\text{\ }t\in \lbrack 0,d],
\end{equation*}%
is then introduced: differentiability in space and sufficient smoothness in
time are presumed. The time rate of change of the morphological descriptor
field in referential description is $\dot{\nu}:=\frac{d}{dt}\nu \left(
x,t\right) \in T_{\nu \left( x,t\right) }\mathcal{M}$, the spatial
derivative is indicated by $N:=D\nu \left( x\right) \in Hom\left( T_{x}%
\mathcal{B},T_{\nu \left( x\right) }\mathcal{M}\right) $.

In selecting a specific morphological descriptor $\nu $ of the material
texture (a parameter also called a fabric tensor or an order parameter) one
choose the prominent geometrical features of the minute world inside the
generic material element to be described, transferring their peculiarities
at gross scale. However, the construction of the essential structures of the
mechanics of complex bodies requires only that $\mathcal{M}$\ be considered
as an abstract finite-dimensional differentiable manifold. Specific
geometrical property of $\mathcal{M}$ have often a clear physical meaning. A
metric is the basic ingredient for the representation of the independent
peculiar kinetic energy pertaining to the substructure, if such energy can
exist in special cases (see \cite{CG}). A connection allows one to represent
contact substructural interaction (microstresses) and to decompose in
invariant way them from the self-actions occurring in each material element
(see \cite{Se}). For these reasons, the specific nature of $\mathcal{M}$ is
left unspecified in the developments below. The subsequent results then hold
for a wide class of complex bodies.

\subsection{A discontinuity surface}

A surface%
\begin{equation*}
\Sigma :=\left\{ x\in cl\mathcal{B}\text{, \ }f\left( x\right) =0\right\} ,
\end{equation*}%
with $f$ a smooth function, is selected in $\mathcal{B}$. At $x$ the normal $%
m$ to $\Sigma $ is defined by%
\begin{equation*}
m=\frac{\nabla f\left( x\right) }{\left\vert \nabla f\left( x\right)
\right\vert },
\end{equation*}%
and orients $\Sigma $ locally. Notice that by such a definition the normal
is considered as a co-vector. The projector over $\Sigma $\ is the
second-rank tensor $\left( I-m\otimes m\right) $, with $I$ the identity. Let 
$x\longmapsto a\left( x\right) $ be a differentiable field over $\mathcal{B}$%
. Its surface gradient at $x\in \Sigma $ is given by $\nabla _{\Sigma
}a:=\nabla a\left( I-m\otimes m\right) $. The trace of $\nabla _{\Sigma }a$
defines the surface divergence of $a$ at $x$, namely $Div_{\Sigma
}a=tr\nabla _{\Sigma }a$.

Assume that $x\mapsto a\left( x\right) $\ takes values in a linear space. If
it is piecewise differentiable and suffers a bounded discontinuity over $%
\Sigma $, its jump $[a]$ across $\Sigma $ is defined by $[a]:=a^{+}-a^{-}$
at every $x\in \Sigma $, that is by the difference between the inner and the
outer traces of $a$\ at $\Sigma $, defined by the limits $a^{\pm
}:=\lim_{\varepsilon \rightarrow 0^{+}}a\left( x\pm \varepsilon m\right) $.
The average $\left\langle a\right\rangle $\ of $a$ across $\Sigma $ at every 
$x$ is defined by $2\left\langle a\right\rangle :=a^{+}+a^{-}$. For every
pair of fields $a_{1}$ and $a_{2}$ with the same properties of $a$, the
relation $[a_{1}a_{2}]=[a_{1}]\left\langle a_{2}\right\rangle +\left\langle
a_{1}\right\rangle [a_{2}]$ holds if the product $a_{1}a_{2}$\ is defined in
distributive way.

It is assumed here that both $x\longmapsto F$ and $x\longmapsto N$ are
discontinuous across $\Sigma $, while the field $x\longmapsto \nu $ is
continuous there. The symbols $\mathbb{F}$ and $\mathbb{N}$ denote the
surface gradients of deformation and the morphological descriptor,
respectively. They are defined by%
\begin{equation*}
\mathbb{F}:=\left\langle F\right\rangle \left( I-m\otimes m\right) ,\text{ \
\ }\mathbb{N}:=\left\langle N\right\rangle \left( I-m\otimes m\right) .
\end{equation*}

\subsection{Linearized kinematics}

Convenience suggests the introduction of the \emph{displacement field}%
\begin{equation*}
\left( x,t\right) \longmapsto u:=u\left( x,t\right) =y\left( x,t\right)
-i\left( x\right) ,\text{ \ \ }\left( x\mathbf{,}t\right) \in \mathcal{B}%
\times \left[ 0,d\right] .
\end{equation*}%
The spatial derivative of $x\longmapsto u$ is indicated here by $W:=Du\left(
x,t\right) $. One gets obviously $\dot{y}\left( x,t\right) =\dot{u}\left(
x,t\right) $ and $F=I+W$, with $I$ the second rank identity tensor. The
condition%
\begin{equation*}
\left\vert W\right\vert <<1
\end{equation*}%
defines the infinitesimal deformation regime. The deformation tensor $E$ is
then substituted by its linearized part $\varepsilon :=symW$. Moreover, no
distinction is also made between $\mathcal{B}$ and $\mathcal{B}_{a}$ in the
sense that $\dot{y}\approx u$ at every $x$ in $\mathcal{B}$. No distinction
is also made between $x\longmapsto \nu $ and $y\longmapsto \nu _{a}:=\nu
\circ y^{-1}$. Along the discontinuity surface $\Sigma $, the surface
displacement gradient is indicated by%
\begin{equation*}
\mathbb{W}:=\left\langle W\right\rangle \left( I-m\otimes m\right) .
\end{equation*}

The linearized kinematical setting justifies the mixed use in the same
context of symbols adopted elsewhere for distinct actual and referential
measures of interactions, as it is made in the ensuing section.

\section{Power and balance of interactions}

\subsection{Classification of the actions}

Relative changes of places between neighboring material elements generate
standard actions represented in the Lagrangian description (that is as a
field over $\mathcal{B}$) by the first Piola-Kirchhoff stress $P\in
Hom\left( T_{x}^{\ast }\mathcal{B},T_{y\left( x\right) }^{\ast }\mathcal{B}%
_{a}\right) $ and the vector of body forces $b\in \mathbb{R}^{3\ast }$.
Substructural events may occur within the material elements even when the
material elements themselves are frozen in space. Inhomogeneous
substructural changes in space generate new contact actions measured by the
so-called \emph{microstress} tensor $\mathcal{S}\in Hom\left( T_{x}^{\ast }%
\mathcal{B},T_{\nu \left( x\right) }^{\ast }\mathcal{M}\right) $. External
bulk fields can act directly over the substructures (magnetic and/or
electric fields, or some other radiative fields). They are represented by
the co-vector $\beta \in T_{\nu \left( x\right) }^{\ast }\mathcal{M}$, at
each $x$. All these interactions contribute to the expression of the power
of all external actions on a generic part of the body, namely on any subset $%
\mathfrak{b}$ of $\mathcal{B}$ with non vanishing volume and the same
regularity properties of $\mathcal{B}$ itself. It is said that a generic
part $\mathfrak{b}$ crosses $\Sigma $ - in this case it is indicated by $%
\mathfrak{b}_{\Sigma }$\ - when $\partial \mathfrak{b}_{\Sigma }\cap \Sigma $
is a simple closed curve where the normal \textsf{n} is defined as a vector
from $T_{x}^{\ast }\Sigma $ at all $x\in \partial \mathfrak{b}_{\Sigma }\cap
\Sigma $ where the normal to $\partial \mathfrak{b}_{\Sigma }$ exists; in
particular \textsf{n} at a given $x$ belongs to the tangent plane to $\Sigma 
$\ at the same point.

The surface $\Sigma $ can be considered as a model of a material layer with
vanishing thickness. In this case it is called a structured surface and it
is assumed that it can carry standard and substructural surface actions, the
former represented by a surface stress $\mathbb{T}$, the latter by a surface
microstress $\mathbb{S}$.

For a generic part $\mathfrak{b}_{\Sigma }$ crossing $\Sigma $, the explicit
expression of the power is the given by (see \cite{M02})%
\begin{eqnarray*}
\mathcal{P}_{\mathfrak{b}}^{ext}\left( \dot{y},\dot{\nu}\right) &:&=\int_{%
\mathfrak{b}_{\Sigma }}\left( b\cdot \dot{y}+\beta \cdot \dot{\nu}\right) 
\text{ }dx+\int_{\partial \mathfrak{b}_{\Sigma }}\left( Pn\cdot \dot{y}+%
\mathcal{S}n\cdot \dot{\nu}\right) \text{ }d\mathcal{H}^{2}+ \\
&&+\int_{\partial \mathfrak{b}_{\Sigma }\cap \Sigma }\left( \mathbb{T}%
\mathsf{n}\cdot \left\langle \dot{y}\right\rangle +\mathbb{S}\mathsf{n}\cdot
\left\langle \dot{\nu}\right\rangle \right) \text{ }d\mathcal{H}^{1}.
\end{eqnarray*}%
A Lagrangian representation is used in the earlier formula. The link with
the Eulerian (actual) description - a link given by the standard Piola
transform - is recalled later. Then everything is reduced to the linearized
setting.

\subsection{Observers}

An observer is intrinsically a representation of all geometrical
environments which are necessary to describe the morphology of a given body
and its motion.

The setting discussed here then incudes the assignment of atlantes over the
reference place $\mathcal{B}$, the ambient space $\mathbb{\hat{R}}^{3}$, the
interval of time and the manifold of substructural shapes $\mathcal{M}$.
Changes in such atlantes are changes in observers. Amid them the interest is
focused here on synchronous changes in observers - the ones leaving
invariant the representation of the time scale - which evaluate the same
reference place. In this sense only changes in $\mathbb{\hat{R}}^{3}$ and $%
\mathcal{M}$ are accounted for.

The ambient space $\mathbb{\hat{R}}^{3}$\ is altered by the action of the
group of diffeomorphisms onto itself, namely the group $Diff(\mathbb{\hat{R}}%
^{3},\mathbb{\hat{R}}^{3})$. Its action has infinitesimal generator
coinciding with the vector field which assigns to each point the vector $%
\frac{d\mathbf{f}_{s}}{ds}\left\vert _{s=0}\right. $, where $\mathbf{f}_{s}$
is a point selected over a smooth curve $s\longmapsto \mathbf{f}_{s}$, $s\in 
\mathbb{R}^{+}$, in $Diff(\mathbb{\hat{R}}^{3},\mathbb{\hat{R}}^{3})$ such
that $\mathbf{f}_{0}=identity$. The parameter $s$ can be identified with the
time.

Since the material substructures are in fact placed in space, changes of
frames in $\mathbb{\hat{R}}^{3}$ alter in principle the geometry of the
substructures and their consequent representation over $\mathcal{M}$. There
is exception when $\nu $ represents only a generic property of the material
substructure not associated with its geometry in space. Besides this
circumstance, one may presume the existence of an homomorphism $h:Diff(%
\mathbb{\hat{R}}^{3},\mathbb{\hat{R}}^{3})\rightarrow G,$ with $G$ the Lie
group of diffeomorphisms of $\mathcal{M}$ onto itself and $h$ mapping the
identity in $Diff(\mathbb{\hat{R}}^{3},\mathbb{\hat{R}}^{3})$ to the
identity in $G$. The curve $s\longmapsto \mathbf{f}_{s}$ then generates a
curve $s\longmapsto g_{s}:=h\left( \mathbf{f}_{s}\right) $ over $G$, and the
corresponding infinitesimal generator of the\ action of $G$ over $\mathcal{M}
$ is then defined by $\xi _{\mathcal{M}}\left( \nu \right) :=\frac{d\nu _{s}%
}{ds}\left\vert _{s=0}\right. =\frac{dh\left( \mathbf{f}_{s}\right) }{ds}%
\left\vert _{s=0}\right. $ (see related discussions in \cite{M08} and \cite%
{M08bis}).

Changes in observers generated by the group $SO\left( 3\right) $\ of the
proper rotations, a subgroup of $Diff(\mathbb{\hat{R}}^{3},\mathbb{\hat{R}}%
^{3})$ are specifically under scrutiny. For $\wedge q$ an element of the Lie
algebra $\mathfrak{so}\left( 3\right) $, $q\in \mathbb{\hat{R}}^{3}$, one
writes the corresponding $\xi _{\mathcal{M}}\left( \nu \right) $, obtained
through $h$, as the product $\mathcal{A}\left( \nu \right) q$ with $\mathcal{%
A}\left( \nu \right) \in Hom(\mathbb{\hat{R}}^{3},T_{\nu }\mathcal{M})$.

By indicating by $\dot{y}^{\ast }$ and $\dot{\nu}^{\ast }$ the pull-back in
the frame of the first observer of the rates evaluated by the second
observer), one gets%
\begin{equation*}
\dot{y}^{\ast }=\dot{y}+q\wedge \left( y-y_{0}\right)
\end{equation*}%
where $y_{0}$ is an arbitrarily fixed centre of rotation in the ambient
space, and%
\begin{equation*}
\dot{\nu}^{\ast }=\dot{\nu}+\mathcal{A}\left( \nu \right) q.
\end{equation*}%
Here $s$ is identified with the time.

\subsection{Invariance and its consequences}

\begin{axiom}
At (dynamic) equilibrium $\mathcal{P}_{\mathfrak{b}_{\Sigma }}^{ext}\left( 
\dot{y},\dot{\nu}\right) $ is invariant under rotational changes in
observers.
\end{axiom}

\begin{theorem}
(i) If for every $\mathfrak{b}_{\Sigma }$\ the vector fields assigning the
values $\sigma n$ and $\mathcal{A}^{\ast }\mathcal{S}n$ are defined over $%
\partial \mathfrak{b}_{\Sigma }$ and are integrable there, the integral
balances of actions on $\mathfrak{b}_{\Sigma }$%
\begin{equation*}
\int_{\mathfrak{b}_{\Sigma }}b\text{ }dx+\int_{\partial \mathfrak{b}_{\Sigma
}}Pn\text{ }d\mathcal{H}^{2}+\int_{\partial \mathfrak{b}_{\Sigma }\cap
\Sigma }\mathbb{T}n\text{ }d\mathcal{H}^{1}=0,
\end{equation*}%
\begin{equation*}
\int_{\mathfrak{b}_{\Sigma }}\left( \left( x-x_{0}\right) \wedge b+\mathcal{A%
}^{\ast }\beta \right) \text{ }dx+\int_{\partial \mathfrak{b}_{\Sigma
}}\left( \left( x-x_{0}\right) \wedge Pn+\mathcal{A}^{\ast }\mathcal{S}%
n\right) \text{ }d\mathcal{H}^{2}+
\end{equation*}%
\begin{equation*}
+\int_{\partial \mathfrak{b}_{\Sigma }\cap \Sigma }\left( \left(
y-y_{0}\right) \wedge \mathbb{T}\mathsf{n}+\mathcal{A}^{\ast }\mathbb{S}%
\mathsf{n}\right) \text{ }d\mathcal{H}^{1}=0,
\end{equation*}%
(ii) Moreover, if the tensor fields $x\longmapsto $ $P$, $\mathcal{S}$ are
of class $C^{1}\left( \mathcal{B}\backslash \Sigma \right) \ $and are also
continuous over the boundary of the body, then%
\begin{equation*}
DivP+b=0,
\end{equation*}%
and there exist a co-vector field $x\longmapsto z\in T_{\nu \left( x\right) }%
\mathcal{M}$ such that%
\begin{equation*}
skw\left( PF^{\ast }\right) =\frac{1}{2}\mathsf{e}\left( \mathcal{A}^{\ast
}z+\left( \nabla \mathcal{A}^{\ast }\right) \mathcal{S}\right)
\end{equation*}%
and%
\begin{equation*}
Div\mathcal{S}-z+\beta =0,
\end{equation*}%
with $z=z_{1}+z_{2}$, $z_{2}\in Ker\mathcal{A}^{\ast }$in the bulk.
Additionally, if the tensor fields $x\longmapsto $, $\mathbb{T},\mathbb{S}$
are of class $C^{1}\left( \Sigma \right) $\ along the surface $\Sigma $ and
are also continuous along its boundary, one gets%
\begin{equation*}
Div_{\Sigma }\mathbb{T}+[P]m=0,
\end{equation*}%
and there exists a co-vector field $x\longmapsto \mathfrak{z}\in T_{\nu
\left( x\right) }\mathcal{M}$, with $x\in \Sigma $, such that%
\begin{equation*}
skw\left( \mathbb{TF}^{\ast }\right) =\frac{1}{2}\mathsf{e}\left( \mathcal{A}%
^{\ast }\mathfrak{z}+\left( \nabla _{\Sigma }\mathcal{A}^{\ast }\right) 
\mathbb{S}\right)
\end{equation*}%
and%
\begin{equation*}
Div_{\Sigma }\mathbb{S}-\mathfrak{z}+[\mathcal{S}]m=0.
\end{equation*}%
(iii) If the rate fields $\left( x,t\right) \longmapsto \dot{y}\left(
x,t\right) \in \mathbb{\hat{R}}^{3}$ and $\left( x,t\right) \longmapsto \dot{%
\nu}\left( x,t\right) \in T_{\nu \left( x\right) }\mathcal{M}$ are
differentiable in space, the local balances imply%
\begin{equation*}
\mathcal{P}_{\mathfrak{b}}^{ext}\left( \dot{y},\dot{\nu}\right) =\mathcal{P}%
_{\mathfrak{b}}^{int}\left( \dot{y},\dot{\nu}\right)
\end{equation*}%
where%
\begin{equation*}
\mathcal{P}_{\mathfrak{b}}^{int}\left( \dot{y},\dot{\nu}\right) :=\int_{%
\mathfrak{b}}(P\cdot \dot{F}+z\cdot \dot{\nu}+\mathcal{S}\cdot \dot{N})\text{
}dx+\int_{\partial \mathfrak{b}_{\Sigma }\cap \Sigma }\left( \mathbb{T\cdot F%
}+\mathfrak{z}\cdot \dot{\nu}+\mathbb{S\cdot N}\right) \text{ }d\mathcal{H}%
^{2}.
\end{equation*}
\end{theorem}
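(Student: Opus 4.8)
The plan is to run the invariance argument of Capriz and Mariano, using that $\mathcal{P}^{ext}_{\mathfrak{b}_{\Sigma}}$ is linear in the rate fields. Inserting the pulled-back rates $\dot{y}^{\ast}=\dot{y}+q\wedge(y-y_{0})$ and $\dot{\nu}^{\ast}=\dot{\nu}+\mathcal{A}(\nu)q$ and using linearity, $\mathcal{P}^{ext}(\dot{y}^{\ast},\dot{\nu}^{\ast})=\mathcal{P}^{ext}(\dot{y},\dot{\nu})+\mathcal{P}^{ext}(q\wedge(y-y_{0}),\mathcal{A}q)$, so the Axiom forces the residual term to vanish for every constant $q$ and every centre $y_{0}$. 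With the cyclic identity $a\cdot(q\wedge d)=q\cdot(d\wedge a)$ and the definition of the adjoint, $\beta\cdot\mathcal{A}q=(\mathcal{A}^{\ast}\beta)\cdot q$, I would factor the constant $q$ out of every integral and write the residual as $q\cdot M(y_{0})$, where $M(y_{0})$ is exactly the left-hand side of the second displayed balance in (i) (in the linearized regime $y\approx x$ identifies the bulk moment arm with $(x-x_{0})$). Arbitrariness of $q$ gives $M(y_{0})=0$; arbitrariness of $y_{0}$, which enters only through the moment arms, forces the coefficient $-y_{0}\wedge(\int b+\int Pn+\int\mathbb{T}\mathsf{n})$ to vanish on its own, which is precisely the first (translational) balance, while $M(0)=0$ is the rotational balance. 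This is the step where the freedom in the centre of rotation, rather than a separate translation axiom, produces the force balance.

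Part (ii) is a localization of the two integral balances. For a part not meeting $\Sigma$, the divergence theorem applied to $\int_{\partial\mathfrak{b}}Pn\,d\mathcal{H}^{2}$ and the arbitrariness of $\mathfrak{b}$ give $DivP+b=0$ wherever $P$ is $C^{1}$. For the moment balance I would apply the divergence theorem to $\int_{\partial\mathfrak{b}}(y-y_{0})\wedge Pn\,d\mathcal{H}^{2}$ and to $\int_{\partial\mathfrak{b}}\mathcal{A}^{\ast}\mathcal{S}n\,d\mathcal{H}^{2}$: the former yields the moment-arm term $(y-y_{0})\wedge DivP$ together with the intrinsic moment of the stress, which through the permutation tensor $\mathsf{e}$ is $2\,skw(PF^{\ast})$; the latter yields $\mathcal{A}^{\ast}Div\mathcal{S}+(\nabla\mathcal{A}^{\ast})\mathcal{S}$ by the product rule for $Div(\mathcal{A}^{\ast}\mathcal{S})$. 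Cancelling the moment-arm terms by the already-established force balance leaves, pointwise, $2\,skw(PF^{\ast})=\mathsf{e}(\mathcal{A}^{\ast}(\beta+Div\mathcal{S})+(\nabla\mathcal{A}^{\ast})\mathcal{S})$. Defining the self-action through the microstructural balance $z:=\beta+Div\mathcal{S}$ turns this into the asserted symmetry relation $skw(PF^{\ast})=\frac{1}{2}\mathsf{e}(\mathcal{A}^{\ast}z+(\nabla\mathcal{A}^{\ast})\mathcal{S})$; and since $\mathcal{A}^{\ast}z=\mathcal{A}^{\ast}z_{1}$ for the orthogonal split $z=z_{1}+z_{2}$ with $z_{2}\in Ker\mathcal{A}^{\ast}$, only $z_{1}$ is pinned down by invariance, which accounts for the stated decomposition.

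The surface relations follow from the same scheme for a part $\mathfrak{b}_{\Sigma}$ crossing $\Sigma$. The divergence theorem on $\int_{\partial\mathfrak{b}_{\Sigma}}Pn\,d\mathcal{H}^{2}$ now leaves, besides the bulk divergence, a residual flux $\int_{\mathfrak{b}_{\Sigma}\cap\Sigma}[P]m\,d\mathcal{H}^{2}$ from the two faces straddling $\Sigma$; the bulk part vanishes by $DivP+b=0$, and the surface divergence theorem turns $\int_{\partial\mathfrak{b}_{\Sigma}\cap\Sigma}\mathbb{T}\mathsf{n}\,d\mathcal{H}^{1}$ into $\int_{\mathfrak{b}_{\Sigma}\cap\Sigma}Div_{\Sigma}\mathbb{T}\,d\mathcal{H}^{2}$, so arbitrariness of the surface patch gives $Div_{\Sigma}\mathbb{T}+[P]m=0$. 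Repeating the moment computation with surface operators (and with $\nabla_{\Sigma}\mathcal{A}^{\ast}$ in place of $\nabla\mathcal{A}^{\ast}$), and setting $\mathfrak{z}:=[\mathcal{S}]m+Div_{\Sigma}\mathbb{S}$, yields the surface microstructural balance $Div_{\Sigma}\mathbb{S}-\mathfrak{z}+[\mathcal{S}]m=0$ and the surface symmetry condition. For (iii) I would run the argument backwards: integrate $\int_{\partial\mathfrak{b}}Pn\cdot\dot{y}\,d\mathcal{H}^{2}$ and $\int_{\partial\mathfrak{b}}\mathcal{S}n\cdot\dot{\nu}\,d\mathcal{H}^{2}$ by parts, using $\nabla\dot{y}=\dot{F}$ and $\nabla\dot{\nu}=\dot{N}$, to obtain $\int_{\mathfrak{b}}(DivP\cdot\dot{y}+P\cdot\dot{F}+Div\mathcal{S}\cdot\dot{\nu}+\mathcal{S}\cdot\dot{N})\,dx$, then eliminate $DivP$ and $Div\mathcal{S}$ through $DivP+b=0$ and $z=\beta+Div\mathcal{S}$, collapsing $\mathcal{P}^{ext}$ into $\int_{\mathfrak{b}}(P\cdot\dot{F}+z\cdot\dot{\nu}+\mathcal{S}\cdot\dot{N})\,dx$ plus the analogous surface integral, that is $\mathcal{P}^{int}$.

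I expect the main obstacle to be the careful bookkeeping of the substructural terms: correctly splitting $Div(\mathcal{A}^{\ast}\mathcal{S})$ into $\mathcal{A}^{\ast}Div\mathcal{S}+(\nabla\mathcal{A}^{\ast})\mathcal{S}$ and its surface counterpart, fixing the factor $\frac{1}{2}$ and the sign convention in the $\mathsf{e}$-map, and above all justifying the surface localization, namely the pillbox passage delivering the jump fluxes $[P]m$, $[\mathcal{S}]m$ and the application of the surface divergence theorem on the patch $\mathfrak{b}_{\Sigma}\cap\Sigma$ with boundary curve $\partial\mathfrak{b}_{\Sigma}\cap\Sigma$. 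The decomposition $z=z_{1}+z_{2}$ needs only the remark that invariance constrains $z$ solely through $\mathcal{A}^{\ast}$, so the $Ker\mathcal{A}^{\ast}$ component remains free, to be fixed constitutively rather than by balance.
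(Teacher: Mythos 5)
Your proposal is correct and takes essentially the same route as the paper: the paper likewise extracts only the integral moment balance from rotational invariance and then recovers the force balance from the arbitrariness of the centre of rotation (its substitution $y_{0}\longmapsto y_{0}+w$ followed by subtraction is exactly your observation that the $y_{0}$-dependent term $-y_{0}\wedge \left( \int b+\int Pn+\int \mathbb{T}\mathsf{n}\right)$ must vanish on its own), with the pointwise balances obtained by the standard use of Gauss' theorem and part (iii) by reverse integration by parts. Your treatment of $z:=\beta +Div\,\mathcal{S}$, the kernel decomposition $z=z_{1}+z_{2}$, and the surface localization also matches the argument of the cited reference on which the paper relies.
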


$\mathcal{P}_{\mathfrak{b}}^{int}\left( \dot{y},\dot{\nu}\right) $\ is
called an inner (or internal) power. \textsf{e} indicates Ricci's
alternating index.

If one enforces Axiom 1 with a requirement of invariance with respect to the
action of the semi-direct product $\mathbb{\hat{R}}^{3}\ltimes SO\left(
3\right) $, rather that calling upon only the action of $SO\left( 3\right) $%
, a proof of Theorem 1 can be found in \cite{M02}. The weaker requirement
here imposes a change in the proof. By using the Axiom 1, in fact, one first
obtains only the integral balance of moments, then one first exploits the
arbitrariness of the centre of rotation and substitutes $y_{0}$ with $%
y_{0}+w $, with $w$ an arbitrary vector depending only on time. If one
subtracts the integral balance of moments from the resulting equation (the
one obtained by the substitution $y_{0}\longmapsto y_{0}+w$) and the
arbitrariness of $w$ also imply the integral balance of forces. Pointwise
balances follow by the standard use of Gauss theorem (see also remarks in 
\cite{M08, M08bis}).

In the Eulerian representation the balance equations become%
\begin{equation*}
div\sigma +b_{a}=0,
\end{equation*}%
\begin{equation*}
skw\left( \sigma \right) =\frac{1}{2}\mathsf{e}\left( \mathcal{A}^{\ast
}z_{a}+\left( grad\mathcal{A}^{\ast }\right) \mathcal{S}_{a}\right) ,
\end{equation*}%
\begin{equation*}
div\mathcal{S}_{a}-z_{a}+\beta _{a}=0,
\end{equation*}%
in the bulk and%
\begin{equation*}
div_{\Sigma }\mathbb{T}_{a}+[\sigma ]m_{a}=0,
\end{equation*}%
\begin{equation*}
skw\left( \mathbb{T}_{a}\right) =\frac{1}{2}\mathsf{e}\left( \mathcal{A}%
^{\ast }\mathfrak{z}_{a}+\left( grad_{\Sigma }\mathcal{A}^{\ast }\right) 
\mathbb{S}_{a}\right) ,
\end{equation*}%
\begin{equation*}
div_{\Sigma }\mathbb{S}_{a}-\mathfrak{z}_{a}+[\mathcal{S}_{a}]m_{a}=0,
\end{equation*}%
over $\Sigma $. Moreover, by indicating by $v$ and $\upsilon $ the values
the values of the actual (Eulerian) representation of the (differentiable)
velocity fields $\left( y,t\right) \longmapsto v\left( y,t\right) $ and $%
\left( y,t\right) \longmapsto \upsilon \left( y,t\right) $, one also get the
equation%
\begin{equation*}
\int_{y\left( \mathfrak{b}\right) }\left( b_{a}\cdot v+\beta \cdot \upsilon
\right) \text{ }dy+\int_{\partial y\left( \mathfrak{b}\right) }\left( \sigma
n_{a}\cdot v+\mathcal{S}_{a}n_{a}\cdot \upsilon \right) \text{ }d\mathcal{H}%
^{2}+
\end{equation*}%
\begin{equation*}
+\int_{\partial y\left( \mathfrak{b}\right) \cap y\left( \Sigma \right)
}\left( \mathbb{T}_{a}\mathsf{n}_{a}\cdot \left\langle v\right\rangle +%
\mathbb{S}_{a}\mathsf{n}_{a}\cdot \left\langle \upsilon \right\rangle
\right) \text{ }d\mathcal{H}^{1}=\int_{y\left( \mathfrak{b}\right) }\left(
\sigma \cdot gradv+z_{a}\cdot \upsilon +\mathcal{S}_{a}\cdot grad\upsilon
\right) \text{ }dy+
\end{equation*}%
\begin{equation*}
+\int_{y\left( \mathfrak{b}\right) \cap y\left( \Sigma \right) }\left( 
\mathbb{T}_{a}\cdot grad_{\Sigma }v+\mathfrak{z}_{a}\cdot \upsilon +\mathbb{S%
}_{a}\cdot grad_{\Sigma }\upsilon \right) \text{ }d\mathcal{H}^{2}.
\end{equation*}%
In previous formulas $n_{a}$, $m_{a}$ and $\mathsf{n}_{a}$\ are the
counterparts of $n$, $m$ and $\mathsf{n}$\ in $\mathcal{B}_{a}$.\ The
differential operators $div$, $div_{\Sigma }$, $grad$ and $grad_{\Sigma }$
involve derivatives with respect to the coordinates $y^{i}$ in the current
macroscopic placement. The index $a$ means `actual'. Moreover, the actual
measures of interactions are obtained by means of the standard Piola
transform as%
\begin{equation*}
b_{a}:=\left( \det F\right) ^{-1}b,\text{ \ \ }\sigma :=\left( \det F\right)
^{-1}PF^{\ast },
\end{equation*}%
\begin{equation*}
z_{a}:=\left( \det F\right) ^{-1}z,\text{ \ \ }\beta _{a}:=\left( \det
F\right) ^{-1}\beta ,\text{ \ \ }\mathcal{S}_{a}:=\left( \det F\right) ^{-1}%
\mathcal{S}F^{\ast },
\end{equation*}%
\begin{equation*}
\mathbb{T}_{a}:=\left( \det \mathbb{F}\right) ^{-1}\mathbb{TF}^{\ast },\text{
\ \ }\mathfrak{z}_{a}:=\left( \det \mathbb{F}\right) ^{-1}\mathfrak{z},\text{
\ \ }\mathbb{S}_{a}:=\left( \det \mathbb{F}\right) ^{-1}\mathbb{SF}^{\ast }.
\end{equation*}%
The proof of the Piola transform can be found on any textbook in nonlinear
continuum mechanics. Less popular is its counterpart on surfaces embedded in
a body: for the relevant proof see \cite{GuMu}.

The Piola transform implies that, in infinitesimal deformation setting,
referential and actual measures of interaction in the bulk and over the
surface $\Sigma $ coincide as $\left\vert W\right\vert $ and $\left\vert 
\mathbb{W}\right\vert $ tend to zero.

The ensuing sections are just restricted to the infinitesimal deformation
setting. Thus, by taking into account the substantial coincidence of
referential and actual measures of interactions in the linearized setting
(as remarked in earlier comments), although the Cauchy stress $\sigma $ is
used, the index "$a$" in the other actual measures of interaction is omitted
for the sake of conciseness.

\section{Linear constitutive structures}

Constitutive structures of the type%
\begin{equation*}
\sigma =\sigma \left( W,\nu ,N\right) ,\text{ \ \ }z=z\left( W,\nu ,N\right)
,\text{ \ \ }\mathcal{S}=\mathcal{S}\left( W,\nu ,N\right) ,
\end{equation*}%
in the bulk and%
\begin{equation*}
\mathbb{T}=\mathbb{T}\left( \mathbb{W},\nu ,\mathbb{N}\right) ,\text{ \ \ }%
\mathfrak{z}=\mathfrak{z}\left( \mathbb{W},\nu ,\mathbb{N}\right) ,\text{ \
\ }\mathbb{S}=\mathbb{S}\left( \mathbb{W},\nu ,\mathbb{N}\right) ,
\end{equation*}%
on the surface $\Sigma $\ can be selected for elastic complex bodies. The
entries of the previous constitutive structures are instantaneous value.
Linearization of them about a pair $\left( \bar{u}\mathbf{,}\bar{\nu}\right) 
$ requires the embedding of $\mathcal{M}$ in some linear space isomorphic to 
$\mathbb{R}^{k}$ for some $k$. This embedding allows one to consider the
space of pairs of maps $\left( y,\nu \right) $ as an infinite-dimensional
manifold modelled over a Sobolev space: the use of the Frechet derivative in
the linearization procedure then follows.

It is then assumed that $\mathcal{M}$ is endowed with a $C^{1}$ Riemannian
metric and the relevant Levi-Civita parallel transport. Such structural
assumption is constitutive in its essential nature. By Nash theorem an
isometric embedding of $\mathcal{M}$\ in a linear space is then always
available but it is neither unique nor rigid. The selection of an embedding
plays the role of a constitutive ingredient of every special model.

Under these conditions, constitutive equations expressed by linear operators 
$\mathbf{L}^{\left( \cdot \right) }$, namely%
\begin{equation*}
\sigma =\mathbf{L}^{\left( \sigma \right) }\left( W,\nu ,N\right) ,\text{ \
\ }z=\mathbf{L}^{\left( z\right) }\left( W,\nu ,N\right) ,\text{ \ \ }%
\mathcal{S}=\mathbf{L}^{\left( \mathcal{S}\right) }\left( W,\nu ,N\right) ,
\end{equation*}%
in the bulk and 
\begin{equation*}
\mathbb{T}=\mathbf{L}^{\left( \mathbb{T}\right) }\left( \mathbb{W},\nu ,%
\mathbb{N}\right) ,\text{ \ \ }\mathfrak{z}=\mathbf{L}^{\left( \mathfrak{z}%
\right) }\left( \mathbb{W},\nu ,\mathbb{N}\right) ,\text{ \ \ }\mathbb{S}=%
\mathbf{L}^{\left( \mathbb{S}\right) }\left( \mathbb{W},\nu ,\mathbb{N}%
\right) ,
\end{equation*}%
on the discontinuity surface, make sense.

The procedure discussed here holds also when the measures of interaction
depend not only on instantaneous values of the state variables but also on
their \emph{entire history}. In this case memory effects can be accounted
for. Viscosity come into play.

\section{Characterization of the bulk free energy in terms of work:
variations on a Del Piero's theme}

\subsection{Histories}

At a point $x$, a \emph{history} is a $BV$ right continuous map%
\begin{equation*}
H:\mathbb{R}^{+}\rightarrow M_{3\times 3}\times \mathbb{R}^{k}\times
M_{k\times 3}
\end{equation*}%
such that, for $s\in \mathbb{R}^{+}$,%
\begin{equation*}
H(s)=\left( W(s),\nu (s),N(s)\right) .
\end{equation*}%
Its restriction $K_{p}^{r}$\ over an interval $[r,p)$ is called \emph{process%
} and is defined by%
\begin{equation*}
K_{p}^{r}(s):=\left( W(r+s),\nu (r+s),N(r+s)\right) ,\quad 0\leq s<p-r.
\end{equation*}%
As shorthand notation, $K_{p}$ indicates a process when it is of the type $%
K_{p}^{0}$. The symbols $\Gamma $ and $\Pi $ denote the spaces of histories
and that of processes, respectively. Of course, here $\Pi \subseteq \Gamma $%
. Processes prolong histories. Given $H$, the history%
\begin{equation}
\left( K_{p}\ast H\right) (s):=\left\{ 
\begin{array}{ll}
K_{p}(s) & 0\leq s<p \\ 
H(s-p) & s\geq p%
\end{array}%
\right.  \label{Cont}
\end{equation}%
is called \emph{prolongation of }$H$ \emph{by means of the process }$K_{p}$.
It is assumed also that%
\begin{equation*}
K_{p}\left( p\right) ^{-}=H\left( 0\right) ,
\end{equation*}%
where $K_{p}\left( p\right) ^{-}:=\lim_{s\nearrow p}K_{p}\left( s\right) $,
to assure differentiability in time, a property necessary for later use.

Along $\Sigma $, a surface history%
\begin{equation*}
s\longmapsto \mathbb{H}(s):=(\mathbb{W}(s),\nu (s),\mathbb{N}(s)),
\end{equation*}%
can be defined when the map $x\longmapsto \nu (x)$\ is continuous across the
surface. Of course, since $\mathcal{M}$ is embedded in a linear space, the
average $\left\langle \nu \right\rangle $ makes now sense so that one may
consider also $\mathbb{H}(s)$\ to be coincident with $(\mathbb{W}%
(s),\left\langle \nu \right\rangle (s),\mathbb{N}(s))$. The results
collected below hold also in this case.

\subsection{History dependent measures of interaction and equivalence of
histories}

In the earlier notes, the state variables have been indicates just formally.
When memory effects are accounted for, the notion of state requires careful
definition because different equivalence relations between pairs of
histories exist so that the notion of state appears to be rather natural in
terms of equivalence classes. Such a question has been tackled variously
(see \cite{DPD, DP, FGM}) on the basis of the abstract approach to
thermodynamics proposed in \cite{CO1} (see also \cite{COS, Si1, Si2}).

In what follows, the point of view developed by Del Piero in \cite{DP} for
linear simple bodies with memory is adapted to cover linear complex bodies
displaying memory effects at macroscopic and microscopic scales. The aim is
of (1) characterizing states and (2) deducing the main property of the free
energy of such complex bodies. A concrete example of such bodies is the one
of relaxor ferroelectrics. Butterfly loops in the diagrams of strain versus
applied electric fields indicate the presence of memory effects \cite{VuR}.

Linear constitutive structures in the bulk are here assumed to be of the
form 
\begin{eqnarray*}
\sigma \left( H\right) &=&G_{\sigma W}\left( 0\right) W\left( 0\right)
+G_{\sigma \nu }\left( 0\right) \nu \left( 0\right) +G_{\sigma N}\left(
0\right) N\left( 0\right) + \\
&&+\int_{0}^{+\infty }\left( \dot{G}_{\sigma W}\left( s\right) W\left(
s\right) +\dot{G}_{\sigma \nu }\left( s\right) \nu \left( s\right) +\dot{G}%
_{\sigma N}\left( s\right) N\left( s\right) \right) \text{ }ds,
\end{eqnarray*}%
\begin{eqnarray*}
\mathcal{S}\left( H\right) &=&G_{\mathcal{S}W}\left( 0\right) W\left(
0\right) +G_{\mathcal{S}\nu }\left( 0\right) \nu \left( 0\right) +G_{%
\mathcal{S}N}\left( 0\right) N\left( 0\right) + \\
&&+\int_{0}^{+\infty }\left( \dot{G}_{\mathcal{S}W}\left( s\right) W\left(
s\right) +\dot{G}_{\mathcal{S}\nu }\left( s\right) \nu \left( s\right) +\dot{%
G}_{\mathcal{S}N}\left( s\right) N\left( s\right) \right) \text{ }ds,
\end{eqnarray*}%
\begin{eqnarray*}
z\left( H\right) &=&G_{zW}\left( 0\right) W\left( 0\right) +G_{z\nu }\left(
0\right) \nu \left( 0\right) +G_{zN}\left( 0\right) N\left( 0\right) + \\
&&+\int_{0}^{+\infty }\left( \dot{G}_{zW}\left( s\right) W\left( s\right) +%
\dot{G}_{z\nu }\left( s\right) \nu \left( s\right) +\dot{G}_{zN}\left(
s\right) N\left( s\right) \right) \text{ }ds,
\end{eqnarray*}%
where the $G_{AB}$'s are the so-called \emph{relaxation functions}, tensor
functions (taking values in different tensor spaces) that are assumed to be
Lebesgue integrable in time: they are absolutely continuous and the limit $%
G_{AB}(\infty ):=\lim_{s\rightarrow +\infty }G_{AB}(s)$ exists. Of course,
the indexes $A$ and $B$ run in $\{\sigma ,z,S\}$ and $\{W,\nu ,N\}$
respectively.

Notice that the integrals above are well defined because $\mathcal{M}$ is
considered embedded in a linear space isomorphic to $\mathbb{R}^{k}$ for
some $k$.

\begin{definition}
Two generic histories $H$ and $H^{\prime }$, such that $H(0)=H^{\prime }(0)$%
,\ are said to be equivalent (in symbols $H\sim H^{\prime }$) when for every
process $K_{p}$, with $p\geq 0$,%
\begin{equation*}
\sigma \left( K_{p}\ast H\right) =\sigma \left( K_{p}\ast H^{\prime }\right)
,\text{ }z\left( K_{p}\ast H\right) =z\left( K_{p}\ast H^{\prime }\right) ,%
\text{ }\mathcal{S}\left( K_{p}\ast H\right) =\mathcal{S}\left( K_{p}\ast
H^{\prime }\right) .
\end{equation*}
\end{definition}

As a consequence, for $p\geq 0$, the condition of equivalence $H\sim
H^{\prime }$ implies%
\begin{equation*}
\int_{0}^{+\infty }\left( \dot{G}_{\sigma W}\left( s+p\right) W\left(
s\right) +\dot{G}_{\sigma \nu }\left( s+p\right) \nu \left( s\right) +\dot{G}%
_{\sigma N}\left( s+p\right) N\left( s\right) \right) \text{ }ds=
\end{equation*}%
\begin{equation*}
=\int_{0}^{+\infty }\left( \dot{G}_{\sigma W}\left( s+p\right) W^{\prime
}\left( s\right) +\dot{G}_{\sigma \nu }\left( s+p\right) \nu ^{\prime
}\left( s\right) +\dot{G}_{\sigma N}\left( s+p\right) N^{\prime }\left(
s\right) \right) \text{ }ds
\end{equation*}%
\begin{equation*}
\int_{0}^{+\infty }\left( \dot{G}_{\mathcal{S}W}\left( s+p\right) W\left(
s\right) +\dot{G}_{\mathcal{S}\nu }\left( s+p\right) \nu \left( s\right) +%
\dot{G}_{\mathcal{S}N}\left( s+p\right) N\left( s\right) \right) \text{ }ds=
\end{equation*}%
\begin{equation*}
=\int_{0}^{+\infty }\left( \dot{G}_{\mathcal{S}W}\left( s+p\right) W^{\prime
}\left( s\right) +\dot{G}_{\mathcal{S}\nu }\left( s+p\right) \nu ^{\prime
}\left( s\right) +\dot{G}_{\mathcal{S}N}\left( s+p\right) N^{\prime }\left(
s\right) \right) \text{ }ds
\end{equation*}%
\begin{equation*}
\int_{0}^{+\infty }\left( \dot{G}_{zW}\left( s+p\right) W\left( s\right) +%
\dot{G}_{z\nu }\left( s+p\right) \nu \left( s\right) +\dot{G}_{zN}\left(
s+p\right) N\left( s\right) \right) \text{ }ds=
\end{equation*}%
\begin{equation*}
=\int_{0}^{+\infty }\left( \dot{G}_{zW}\left( s+p\right) W^{\prime }\left(
s\right) +\dot{G}_{z\nu }\left( s+p\right) \nu ^{\prime }\left( s\right) +%
\dot{G}_{zN}\left( s+p\right) N^{\prime }\left( s\right) \right) \text{ }ds.
\end{equation*}

Let the distance $d\left( H,H^{\prime }\right) $ be defined by%
\begin{equation*}
d\left( H,H^{\prime }\right) :=\sup_{t>0}\left\{ \left\vert
\int_{0}^{+\infty }\left( \dot{G}_{\sigma W}\left( s+t\right) W\left(
s\right) +\dot{G}_{\sigma \nu }\left( s+t\right) \nu \left( s\right) +\dot{G}%
_{\sigma N}\left( s+t\right) N\left( s\right) \right) \text{ }ds\right.
\right. -
\end{equation*}%
\begin{equation*}
-\left. \int_{0}^{+\infty }\left( \dot{G}_{\sigma W}\left( s+t\right)
W^{\prime }\left( s\right) +\dot{G}_{\sigma \nu }\left( s+t\right) \nu
^{\prime }\left( s\right) +\dot{G}_{\sigma N}\left( s+t\right) N^{\prime
}\left( s\right) \right) \text{ }ds\right\vert +
\end{equation*}%
\begin{equation*}
+\left\vert \int_{0}^{+\infty }\left( \dot{G}_{zW}\left( s+t\right) W\left(
s\right) +\dot{G}_{z\nu }\left( s+t\right) \nu \left( s\right) +\dot{G}%
_{zN}\left( s+t\right) N\left( s\right) \right) \text{ }ds\right. -
\end{equation*}%
\begin{equation*}
-\left. \int_{0}^{+\infty }\left( \dot{G}_{zW}\left( s+t\right) W^{\prime
}\left( s\right) +\dot{G}_{z\nu }\left( s+t\right) \nu ^{\prime }\left(
s\right) +\dot{G}_{zN}\left( s+t\right) N^{\prime }\left( s\right) \right) 
\text{ }ds\right\vert +
\end{equation*}%
\begin{equation*}
+\left\vert \int_{0}^{+\infty }\left( \dot{G}_{\mathcal{S}W}\left(
s+t\right) W\left( s\right) +\dot{G}_{\mathcal{S}\nu }\left( s+t\right) \nu
\left( s\right) +\dot{G}_{\mathcal{S}N}\left( s+t\right) N\left( s\right)
\right) \text{ }ds\right. -
\end{equation*}%
\begin{equation*}
-\left. \left. \int_{0}^{+\infty }\left( \dot{G}_{\mathcal{S}W}\left(
s+t\right) W^{\prime }\left( s\right) +\dot{G}_{\mathcal{S}\nu }\left(
s+t\right) \nu ^{\prime }\left( s\right) +\dot{G}_{\mathcal{S}N}\left(
s+t\right) N^{\prime }\left( s\right) \right) \text{ }ds\right\vert \right\}
.
\end{equation*}%
It induces a pseudometric in the space $\Gamma $ of histories and a metric
in the quotient space $\Gamma /\sim $. By taking the limit $t\rightarrow 0$
in the expression above, one gets%
\begin{equation*}
\left\vert \sigma \left( H\right) -\sigma \left( H^{\prime }\right)
\right\vert +\left\vert z\left( H\right) -z\left( H^{\prime }\right)
\right\vert +\left\vert \mathcal{S}\left( H\right) -\mathcal{S}\left(
H^{\prime }\right) \right\vert \leq d\left( H,H^{\prime }\right) +
\end{equation*}%
\begin{equation*}
\left\vert \left( G_{\sigma W}\left( 0\right) W\left( 0\right) +G_{\sigma
\nu }\left( 0\right) \nu \left( 0\right) +G_{\sigma N}\left( 0\right)
N\left( 0\right) \right) \right. -
\end{equation*}%
\begin{equation*}
\left. \left( G_{\sigma W}\left( 0\right) W^{\prime }\left( 0\right)
+G_{\sigma \nu }\left( 0\right) \nu ^{\prime }\left( 0\right) +G_{\sigma
N}\left( 0\right) N^{\prime }\left( 0\right) \right) \right\vert +
\end{equation*}%
\begin{equation*}
\left\vert \left( G_{zW}\left( 0\right) W\left( 0\right) +G_{z\nu }\left(
0\right) \nu \left( 0\right) +G_{zN}\left( 0\right) N\left( 0\right) \right)
\right. -
\end{equation*}%
\begin{equation*}
\left. \left( G_{zW}\left( 0\right) W^{\prime }\left( 0\right) +G_{z\nu
}\left( 0\right) \nu ^{\prime }\left( 0\right) +G_{zN}\left( 0\right)
N^{\prime }\left( 0\right) \right) \right\vert +
\end{equation*}%
\begin{equation*}
\left\vert \left( G_{\mathcal{S}W}\left( 0\right) W\left( 0\right) +G_{%
\mathcal{S}\nu }\left( 0\right) \nu \left( 0\right) +G_{\mathcal{S}N}\left(
0\right) N\left( 0\right) \right) \right. -
\end{equation*}%
\begin{equation*}
\left. \left( G_{\mathcal{S}W}\left( 0\right) W^{\prime }\left( 0\right) +G_{%
\mathcal{S}\nu }\left( 0\right) \nu ^{\prime }\left( 0\right) +G_{\mathcal{S}%
N}\left( 0\right) N^{\prime }\left( 0\right) \right) \right\vert .
\end{equation*}%
Since two equivalent histories are characterized by identical initial
values, namely $W\left( 0\right) =$ $W^{\prime }\left( 0\right) $, $\nu
\left( 0\right) =\nu ^{\prime }\left( 0\right) $, $N\left( 0\right)
=N^{\prime }\left( 0\right) $, from previous inequality it follows that 
\emph{two equivalent histories determine the same macroscopic stress,
microstress and substructural self-action}. The proof of the analogous
property for simple bodies in \cite{DP} is based on the use of a seminorm.

\begin{proposition}
The pseudometric $d\left( \cdot ,\cdot \right) $ has the following
properties:
\end{proposition}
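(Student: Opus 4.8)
The plan is to verify the pseudometric axioms straight from the definition and then to identify the null locus of $d$ with the equivalence classes of $\sim$, which is what makes $d$ descend to a genuine metric on $\Gamma/\!\sim$. It is convenient to abbreviate, for each history $H$ and each $t>0$, the three tail functionals
\[
\sigma_{t}(H):=\int_{0}^{+\infty}\bigl(\dot{G}_{\sigma W}(s+t)W(s)+\dot{G}_{\sigma\nu}(s+t)\nu(s)+\dot{G}_{\sigma N}(s+t)N(s)\bigr)\,ds,
\]
and analogously $z_{t}(H)$ and $\mathcal{S}_{t}(H)$, so that $d(H,H^{\prime})=\sup_{t>0}\bigl(|\sigma_{t}(H)-\sigma_{t}(H^{\prime})|+|z_{t}(H)-z_{t}(H^{\prime})|+|\mathcal{S}_{t}(H)-\mathcal{S}_{t}(H^{\prime})|\bigr)$. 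Each such functional is linear in $H$, and the integrals converge because the $\dot{G}_{AB}$ are Lebesgue integrable while the $BV$ histories are bounded; this also secures finiteness of $d$.

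The first three properties are then immediate. Non-negativity and the vanishing $d(H,H)=0$ hold because $d$ is a supremum of sums of absolute values, each of which is zero when $H=H^{\prime}$; symmetry holds because $|\sigma_{t}(H)-\sigma_{t}(H^{\prime})|=|\sigma_{t}(H^{\prime})-\sigma_{t}(H)|$ termwise, and likewise for the $z$ and $\mathcal{S}$ summands. For the triangle inequality I would apply, at fixed $t$ and for an intermediate history $H^{\prime\prime}$, the triangle inequality of the modulus to each of the three summands, and then use that the supremum of a sum is bounded by the sum of the suprema, obtaining $d(H,H^{\prime\prime})\le d(H,H^{\prime})+d(H^{\prime},H^{\prime\prime})$.

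The substantive property is the characterization $d(H,H^{\prime})=0$ if and only if $H\sim H^{\prime}$ (for histories sharing the initial value $H(0)=H^{\prime}(0)$). The key computational step is to split the memory integral defining $\sigma(K_{p}\ast H)$ at $s=p$: on $[0,p)$ the integrand involves only the process $K_{p}$, while the substitution $s\mapsto s-p$ on $[p,+\infty)$ turns the remaining part into exactly $\sigma_{p}(H)$, and the instantaneous term $G_{\sigma W}(0)(K_{p}\ast H)(0)$ likewise depends only on $K_{p}$. Since the process-dependent contributions are common to $H$ and $H^{\prime}$, they cancel in the difference, giving $\sigma(K_{p}\ast H)-\sigma(K_{p}\ast H^{\prime})=\sigma_{p}(H)-\sigma_{p}(H^{\prime})$, and similarly for $z$ and $\mathcal{S}$. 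Because the three summands are non-negative, $d(H,H^{\prime})=0$ is equivalent to their vanishing for every $t>0$, hence to $\sigma_{p},z_{p},\mathcal{S}_{p}$ agreeing on $H$ and $H^{\prime}$ for all $p>0$, which by the displayed identity is precisely the equivalence condition of Definition 1.

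I expect the main obstacle to be the careful bookkeeping in this last step: one must check that the prolongation splitting is legitimate for $BV$ histories (so that the change of variable and the separation of the $s=0$ value are justified), and one must treat the boundary case $p=0$ apart, since there $K_{0}\ast H=H$ and it is exactly the hypothesis $H(0)=H^{\prime}(0)$ that forces the residual instantaneous terms $G_{AB}(0)H(0)$ to cancel, recovering $\sigma(H)=\sigma(H^{\prime})$ from the $t\to 0$ limit already recorded before the proposition. Everything else is routine once the tail functionals $\sigma_{t},z_{t},\mathcal{S}_{t}$ have been isolated.
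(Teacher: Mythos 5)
Your proposal proves the wrong statement. In the paper, the ``following properties'' of this Proposition are the three items listed immediately after it in a description environment: \textbf{Contraction} ($d(K_p\ast H,K_p\ast H')\le d(H,H')$ for every $p>0$), \textbf{Fading memory} (for every $\varepsilon>0$ there exists $r$ such that $d(K_p\ast H,K_p\ast H')<\varepsilon$ for all $p>r$), and \textbf{Approachability} ($d(H,H_p\ast H')\to 0$ as $p\to+\infty$ whenever $H_p(p)^-=H'(0)$, with $H_p$ the process generated by $H$ over $[0,p)$). What you verify instead --- the pseudometric axioms and the identification of the null locus of $d$ with the equivalence classes of $\sim$ --- is the remark the paper makes in the sentence \emph{preceding} the proposition (``It induces a pseudometric in the space $\Gamma$ of histories and a metric in the quotient space $\Gamma/\sim$''), and none of the three listed properties is addressed anywhere in your argument. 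As a proof of the proposition, the proposal therefore has a complete gap.

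That said, your central computation is exactly the right tool for part of the missing content. The splitting/shift identity you derive for prolongations, namely $\sigma_t(K_p\ast H)-\sigma_t(K_p\ast H')=\sigma_{t+p}(H)-\sigma_{t+p}(H')$ (and likewise for $z_t$ and $\mathcal{S}_t$), yields Contraction at once: taking the supremum over $t>0$ on the left produces a supremum over $t>p$ on the right, which is dominated by the supremum over all $t>0$, i.e.\ by $d(H,H')$; this is in essence the paper's own argument. But Fading memory requires a further quantitative step you never take: bounding each difference of tail functionals by $2M\int_p^{+\infty}\left\vert \dot{G}_{AB}(s)\right\vert ds$, where $M$ is the supremum of the (bounded) history components, and then using the assumed Lebesgue integrability of the $\dot{G}_{AB}$ to make these tails small uniformly in $t$ for $p$ large. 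And Approachability requires writing $H=H_p\ast H^p$ (with $H^p$ the shifted tail of $H$), so that $d(H,H_p\ast H')=d(H_p\ast H^p,H_p\ast H')$, and then invoking Fading memory as $p\to+\infty$. Neither step appears in your proposal; to repair it you would keep your tail functionals $\sigma_t$, $z_t$, $\mathcal{S}_t$ and the shift identity, and append these two arguments.
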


\begin{description}
\item[Contraction] \emph{for every} $p>0$%
\begin{equation*}
d\left( K_{p}\ast H,K_{p}\ast H^{\prime }\right) \leq d\left( H,H^{\prime
}\right) .
\end{equation*}

\item[Fading memory] \emph{for every} $\varepsilon >0$ \emph{there exists} $%
r $ \emph{such that, for every} $p>r$\emph{, one gets}%
\begin{equation*}
d\left( K_{p}\ast H,K_{p}\ast H^{\prime }\right) <\varepsilon .
\end{equation*}

\item[Approachability] \emph{if }$H_{p}\left( p\right) ^{-}=H^{\prime
}\left( 0\right) $\emph{, then}%
\begin{equation*}
\lim_{d\rightarrow +\infty }d\left( H,H_{p}\ast H^{\prime }\right) =0,
\end{equation*}%
\emph{with} $H_{p}$ \emph{the process generated by} $H$ \emph{over} $[0,p)$.
\end{description}

\begin{proof}
Contraction arises directly from the definition. In fact, in writing
explicitly $d\left( K_{p}\ast H,K_{p}\ast H^{\prime }\right) $ by taking
into account (\ref{Cont}) and the definition of the semimetric $d\left(
\cdot ,\cdot \right) $, one manages integrals containing terms of the type $%
\dot{G}_{AB}\left( s+t\right) B\left( s-r\right) $, with $A$ and $B$ running
in $\{\sigma ,z,S\}$ and $\{W,\nu ,N\}$ respectively. Each of these terms is
also equal to $\dot{G}_{AB}\left( s+t+r\right) B\left( s\right) $\ (see also 
\cite{DP}). Consequently, by taking the superemum over $t$, $p>0$ one gets
the contraction property straight away.

Let now define%
\begin{equation*}
M:=\max \left\{ \sup_{s>0}|W^{\prime }(s)|,\sup_{s>0}|W(s)|,\sup_{s>0}|\nu
^{\prime }(s)|,\sup_{s>0}|\nu (s)|,\sup_{s>0}|N^{\prime
}(s)|,\sup_{s>0}|N(s)|\right\} .
\end{equation*}%
By taking into account that%
\begin{equation*}
\left\vert \int_{a}^{b}\dot{G}_{AB}(s)B(s)\text{ }ds\right\vert \leq
\sup_{s\in \lbrack a,b]}\left\vert B(s)\right\vert \int_{a}^{b}\left\vert
G_{AB}(s)\right\vert \text{ }ds,
\end{equation*}%
one obtains%
\begin{equation*}
d\left( K_{p}\ast H,K_{p}\ast H^{\prime }\right) \leq 2M\sup_{t>0}\left\{
\int_{p}^{+\infty }\left\vert \dot{G}_{\sigma W}\left( s+t\right)
\right\vert \text{ }ds\right. +
\end{equation*}%
\begin{equation*}
+\int_{p}^{+\infty }\left\vert \dot{G}_{\sigma \nu }\left( s+t\right)
\right\vert \text{ }ds+\int_{p}^{+\infty }\left\vert \dot{G}_{\sigma
N}\left( s+t\right) \right\vert \text{ }ds+
\end{equation*}%
\begin{equation*}
+\int_{p}^{+\infty }\left\vert \dot{G}_{zW}\left( s+t\right) \right\vert 
\text{ }ds+\int_{p}^{+\infty }\left\vert \dot{G}_{z\nu }\left( s+t\right)
\right\vert \text{ }ds+\int_{p}^{+\infty }\left\vert \dot{G}_{zN}\left(
s+t\right) \right\vert \text{ }ds+
\end{equation*}%
\begin{equation*}
\left. +\int_{p}^{+\infty }\left\vert \dot{G}_{\mathcal{S}W}\left(
s+t\right) \right\vert \text{ }ds+\int_{p}^{+\infty }\left\vert \dot{G}_{%
\mathcal{S}\nu }\left( s+t\right) \right\vert \text{ }ds+\int_{p}^{+\infty
}\left\vert \dot{G}_{\mathcal{S}N}\left( s+t\right) \right\vert \text{ }%
ds\right\} .
\end{equation*}%
However, since it has been assumed that the maps $s\longmapsto |\dot{G}%
_{AB}|\left( s\right) $ are integrable, there exists $r$ such that, for $r>m$%
, one may find $\varepsilon >0$ such that the right-hand side of the
previous relation is lesser or equal to $\varepsilon \left( 2M\right) ^{-1}$%
. Fading memory then follows. It also implies the property of
approachability. In fact, from%
\begin{equation*}
d(H_{p}\ast H^{\prime },H)=d(H_{p}\ast H^{\prime },H_{p}\ast H^{d})
\end{equation*}%
and fading memory, the approachability can be obtained by letting $p$ to $%
+\infty $.
\end{proof}

Previous theorem suggests the following definition:

\begin{definition}
$H$ is said to be approachable from another history $H^{\prime }$ if there
exists a family of processes $\left( p\longmapsto K_{p},p\in \mathbb{R}%
^{+}\right) $ prolonging $H^{\prime }$ and such that $K_{p}\ast H^{\prime }$
converges to $H^{\prime }$ with respect to the pseudometric $d\left( \cdot
,\cdot \right) $ as $p\rightarrow +\infty $.
\end{definition}

\subsection{States and actions}

The state space is identified here with the space of histories $\Gamma $
endowed with the norm%
\begin{equation*}
\left\Vert H\right\Vert _{\Gamma }=\left\vert F\right\vert
_{L^{2}}+\left\vert \nu \right\vert _{W^{1,2}}.
\end{equation*}%
In this way each state is defined to within an equivalent history.

\begin{definition}
A function $f:\Gamma \rightarrow \mathbb{R}$ is called a \textbf{state
function} if $H\sim H^{\prime }$ implies $f\left( H\right) =f\left(
H^{\prime }\right) $.
\end{definition}

\begin{definition}
A function $a:\Gamma \times \Pi \rightarrow \mathbb{R}$ is called\ an 
\textbf{action} if
\end{definition}

\begin{description}
\item[(1)] $a$ \emph{is additive with respect to prolongations, namely}%
\begin{equation*}
a\left( K^{\prime }\ast K,H\right) =a\left( K^{\prime },K\ast H\right)
+a\left( K,H\right) ,
\end{equation*}

\item[(2)] \emph{the map} $a\left( K,\cdot \right) :\Gamma \rightarrow 
\mathbb{R}$ \emph{is continuous.}
\end{description}

\begin{definition}
The action $a$ satisfies the \textbf{dissipation property} along $H$ if, for
every $\varepsilon >0$, there exists $\delta >0$ such that%
\begin{equation*}
d\left( K_{p}\ast H,H\right) <\delta \Longrightarrow a\left( K_{p},H\right)
>-\varepsilon .
\end{equation*}
\end{definition}

\begin{definition}
Given a generic action $a\left( \cdot ,\cdot \right) $, a function $f:\Gamma
\rightarrow \mathbb{R}$ is called a \textbf{lower potential} for $a\left(
\cdot ,\cdot \right) $ if for every $H$ and $H^{\prime }$ belonging to $%
\Gamma $ and for every $\varepsilon >0$ there exists $\delta >0$ such that%
\begin{equation*}
f\left( H\right) -f\left( H^{\prime }\right) <a\left( K,H\right)
+\varepsilon ,
\end{equation*}%
for every $K$ such that $d\left( K\ast H^{\prime },H\right) <\delta $.
\end{definition}

The definition of state function comes from \cite{DP} - the difference is
here only the extended meaning of the history -\ while the definitions of
action and lower potential have been introduced in \cite{CO1}.

\subsection{Work density in the bulk}

At every $x\in \mathcal{B}$ the \emph{work density} $w\left( H\right) $ is
defined by%
\begin{equation*}
w(H):=\int_{0}^{+\infty }\left[ \sigma \left( H^{s}\right) \cdot \dot{F}%
(s)+z\left( H^{s}\right) \cdot \dot{\nu}(s)+\mathcal{S}\left( H^{s}\right)
\cdot \dot{N}(s)\right] \text{ }ds.
\end{equation*}%
Note the different algebraic sign in the analogous definition of $w\left(
H\right) $ given in \cite{DP} with reference to simple bodies, i.e. in
absence of substructural interactions.

Given an history $H$ and a prolongation $K_{p}=\left( F_{p},\nu
_{p},N_{p}\right) $ of it, \emph{the work density over the prolongation} $%
K_{p}$, indicated by $w\left( K_{p},H\right) $ is then defined by%
\begin{equation*}
w\left( K_{p},H\right) :=w\left( K_{p}\ast H\right) -w(H)
\end{equation*}%
that is%
\begin{equation*}
w\left( K_{p},H\right) =\int_{0}^{p}\left[ \sigma \left( K_{p}^{s}\ast
H\right) \cdot \dot{W}_{p}(s)+z\left( K_{p}^{s}\ast H\right) \cdot \dot{\nu}%
_{p}(s)+\mathcal{S}\left( K_{p}^{s}\ast H\right) \cdot \dot{N}_{p}(s)\right] 
\text{ }ds.
\end{equation*}%
An analogous power density over prolongations is defined in \cite{DP} but
with reference to relative continuations (defined below). The use of
relative prolongation implies the appearance of further terms in the
explicit expression of $w\left( K_{d},H\right) $, terms due to the jump in $%
H\left( 0\right) $.

\begin{theorem}
The work density over a prolongation is an action and for any fixed $K_{p}$
the map $w\left( K_{p},\cdot \right) $ is a state function.
\end{theorem}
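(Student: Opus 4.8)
The plan is to prove the statement in two parts, matching the two claims: first that $w(\cdot,\cdot)$ is an action (Definition of action, requiring additivity over prolongations and continuity), and second that for fixed $K_p$ the map $w(K_p,\cdot)$ is a state function (i.e. respects the equivalence $\sim$). Throughout I would rely on the explicit integral representation
\begin{equation*}
w\left( K_{p},H\right) =\int_{0}^{p}\left[ \sigma \left( K_{p}^{s}\ast H\right) \cdot \dot{W}_{p}(s)+z\left( K_{p}^{s}\ast H\right) \cdot \dot{\nu}_{p}(s)+\mathcal{S}\left( K_{p}^{s}\ast H\right) \cdot \dot{N}_{p}(s)\right] \text{ }ds
\end{equation*}
and on the defining relation $w(K_p,H)=w(K_p\ast H)-w(H)$.

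\textbf{Additivity.} First I would establish condition (1) of the action definition, namely $a(K'\ast K,H)=a(K',K\ast H)+a(K,H)$ with $a=w$. The clean way is to use the telescoping identity $w(K_p,H)=w(K_p\ast H)-w(H)$ together with the associativity of prolongation, $\left(K'\ast K\right)\ast H=K'\ast\left(K\ast H\right)$. Writing $w(K'\ast K,H)=w\left((K'\ast K)\ast H\right)-w(H)$, I would insert the term $\pm\, w(K\ast H)$ to split this as $\left[w\left(K'\ast(K\ast H)\right)-w(K\ast H)\right]+\left[w(K\ast H)-w(H)\right]$, which is exactly $w(K',K\ast H)+w(K,H)$. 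This step is purely algebraic and should pose no difficulty beyond bookkeeping on the prolongation operation; the only care needed is that the restriction and shift of the integration variable in the explicit formula are compatible with the concatenation $(\ref{Cont})$.

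\textbf{Continuity and the state-function property.} Next I would verify condition (2), continuity of $w(K_p,\cdot):\Gamma\to\mathbb{R}$ in the pseudometric $d$. Here the key input is the earlier-established fact that two equivalent histories determine the same $\sigma$, $z$, $\mathcal{S}$, together with the estimate $\left\vert\sigma(H)-\sigma(H')\right\vert+\left\vert z(H)-z(H')\right\vert+\left\vert\mathcal{S}(H)-\mathcal{S}(H')\right\vert\leq d(H,H')+\text{(initial-value terms)}$ derived just above the Proposition. Because the integrand of $w(K_p,\cdot)$ depends on $H$ only through the interaction measures $\sigma(K_p^s\ast H)$, $z(K_p^s\ast H)$, $\mathcal{S}(K_p^s\ast H)$ evaluated along the \emph{fixed} process $K_p$ (the rates $\dot W_p,\dot\nu_p,\dot N_p$ are independent of $H$), I would bound $\left\vert w(K_p,H)-w(K_p,H')\right\vert$ by integrating that stress-difference estimate against the fixed rates over $[0,p]$; the Contraction property $d(K_p\ast H,K_p\ast H')\leq d(H,H')$ from the Proposition controls the $d$-part uniformly. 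For the state-function claim, I would observe that if $H\sim H'$ then $K_p\ast H\sim K_p\ast H'$ for every $p$ (this is precisely the content of Contraction together with the equivalence conditions displayed after the definition of $\sim$), so the two integrands coincide pointwise and hence $w(K_p,H)=w(K_p,H')$.

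\textbf{Main obstacle.} I expect the genuine difficulty to lie in the continuity argument, specifically in handling the initial-value contributions $G_{AB}(0)\left(W(0)-W'(0)\right)$ etc. that appear in the stress estimate but are \emph{not} captured by the pseudometric $d$. Since $d$ only sees the history through the convolution tails $\dot G_{AB}(s+t)$, continuity in $d$ alone would not control these boundary terms unless one restricts to histories with matching (or $d$-close) initial data, or unless the topology on $\Gamma$ supplied by $\left\Vert\cdot\right\Vert_\Gamma$ is invoked to dominate them. The resolution I would pursue is that along the fixed process $K_p$ the instantaneous values $W_p(s),\nu_p(s),N_p(s)$ entering the leading $G_{AB}(0)$-terms of $\sigma(K_p^s\ast H)$ are determined by $K_p$ and not by $H$ for $s<p$; the $H$-dependence of those instantaneous terms enters only through $H(0)=K_p(p)^-$, which is pinned by the assumed matching condition $K_p(p)^-=H(0)$. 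Thus the boundary terms are either $H$-independent or are themselves continuous in the chosen state topology, and the estimate closes. Verifying this compatibility carefully is the crux of the argument.
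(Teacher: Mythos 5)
Your proposal is correct and takes essentially the same route as the paper's proof: additivity via the telescoping identity $w(K,H)=w(K\ast H)-w(H)$, and both the continuity and state-function claims via a Lipschitz-type bound of the form $\left\vert w\left( K_{p},H\right) -w\left( K_{p},H^{\prime }\right)\right\vert \leq \tilde{M}\,d\left( H,H^{\prime }\right)$, obtained by integrating the stress differences along the fixed process against the rates $\dot{W}_{p},\dot{\nu}_{p},\dot{N}_{p}$. The cancellation of the instantaneous $G_{AB}(0)$-terms that you identify as the crux (the present value of $K_{p}^{s}\ast H$ is $K_{p}(s)$, hence $H$-independent for $s<p$) is precisely what makes the paper's asserted inequality close, so your resolution supplies the step the paper leaves implicit.
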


\begin{proof}
To prove the latter property, first define%
\begin{equation*}
\tilde{M}:=\sup \left\{ \int_{0}^{p}\left\vert \dot{W}_{p}(s)\right\vert 
\text{ }ds,\int_{0}^{p}\left\vert \dot{\nu}_{p}(s)\right\vert \text{ }%
ds,\int_{0}^{p}\left\vert \dot{N}_{p}(s)\right\vert \text{ }ds\right\}
\end{equation*}%
and remind that the manifold of substructural shapes is embedded in a linear
space. It then follows that%
\begin{equation*}
\left\vert w\left( K_{p},H\right) -w\left( K_{p},H^{\prime }\right)
\right\vert \leq \tilde{M}d\left( H,H^{\prime }\right)
\end{equation*}%
from which one realizes that $w\left( K_{p},\cdot \right) $ is a state
function. Previous inequality implies also that $w\left( K_{p},\cdot \right) 
$ is also Lipschitz continuous. The additivity of $w$\ with respect to the
processes is implied by the definition.
\end{proof}

The following lemma is a version of a proposition in \cite{DP} (see also 
\cite{Day3}), the proof of which can be easily adapted to the present case.

\begin{lemma}
Assume that $\nu $ can be freely selected in the linear space in which the
manifold of substructural shapes is isometrically embedded. Given two
different histories $H=\left( W,\nu ,N\right) $ and $H^{\prime }=\left(
W^{\prime },\nu ^{\prime },N^{\prime }\right) $ and a process $L_{p}\left(
r\right) $ defined by%
\begin{equation*}
L_{p}(r):=\left( \frac{p-r}{p}W(p)+\frac{r}{p}W^{\prime }(0),\frac{p-r}{p}%
\nu (p)+\frac{r}{p}\nu ^{\prime }(0),\frac{p-r}{p}N(p)+\frac{r}{p}N^{\prime
}(0)\right) ,
\end{equation*}%
one gets%
\begin{equation*}
\lim_{p\rightarrow +\infty }d(H_{p}\ast L_{p}\ast H^{\prime },H)=0
\end{equation*}%
and%
\begin{eqnarray*}
w(H_{p}\ast L_{p}\ast H^{\prime }) &=&w(H)+w(H^{\prime })+ \\
&&+\frac{1}{2}(\sum_{A,B}G_{AB}(\infty )B(\infty )\cdot B(\infty
)-\sum_{A,B^{\prime }}G_{AB^{\prime }}(\infty )B^{\prime }(0)\cdot B^{\prime
}(0)).
\end{eqnarray*}
\end{lemma}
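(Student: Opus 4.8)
The plan is to treat the two assertions separately, establishing the convergence statement first and then the work identity by decomposing the total work through the additivity already supplied by Theorem 2.

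For the limit $\lim_{p\rightarrow+\infty} d(H_{p}\ast L_{p}\ast H^{\prime},H)=0$, I would note that, by the very definition of prolongation in (\ref{Cont}), the history $H_{p}\ast L_{p}\ast H^{\prime}$ coincides with $H$ on the whole recent interval $[0,p)$ and differs from $H$ only on the tail $[p,+\infty)$ (the linear bridge $L_{p}$ followed by $H^{\prime}$). Writing $d(H_{p}\ast L_{p}\ast H^{\prime},H)$ out from its definition, every integrand difference is supported in $s\geq p$, so each term is controlled by $2M\sup_{t>0}\int_{p}^{+\infty}|\dot{G}_{AB}(s+t)|\,ds$, exactly as in the proof of the fading memory and approachability properties of Proposition 1. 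Integrability of the kernels $s\longmapsto|\dot{G}_{AB}|(s)$ forces these tails to zero uniformly in $t$, which gives the claim; this is really the approachability property specialized to the bridging process $L_{p}$.

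For the work identity I would begin from the additivity of the action $w$ (Theorem 2): writing $H_{p}\ast L_{p}\ast H^{\prime}=(H_{p}\ast L_{p})\ast H^{\prime}$ and applying additivity twice gives
\[
w(H_{p}\ast L_{p}\ast H^{\prime})=w(H_{p},L_{p}\ast H^{\prime})+w(L_{p},H^{\prime})+w(H^{\prime}).
\]
The summand $w(H^{\prime})$ already appears in the target formula, so the task reduces to identifying the limits of the other two pieces. The bridging term $w(L_{p},H^{\prime})$ I would evaluate by a quasi-static argument: the process $L_{p}$ interpolates linearly between $H(p)$ and $H^{\prime}(0)$ over a window of length $p$, so its rates are of order $1/p$ while its duration grows, and along such an asymptotically slow process the linear constitutive laws relax to their equilibrium form governed by $G_{AB}(\infty)=G_{AB}(0)+\int_{0}^{+\infty}\dot{G}_{AB}(s)\,ds$. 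Inserting the equilibrium response into the integral defining $w(L_{p},H^{\prime})$ and using the symmetry of the resulting quadratic form reduces it to a difference of endpoint energies; since $H(p)\rightarrow H(\infty)$ as $p\rightarrow+\infty$, this produces the $\tfrac{1}{2}G_{AB}(\infty)$ contribution carried by $B(\infty)$ and $B^{\prime}(0)$. The remaining term $w(H_{p},L_{p}\ast H^{\prime})$ I would show converges to $w(H)$: the process $H_{p}$ sweeps out all of $H$, and by the fading-memory estimate the dependence on the remote data $L_{p}\ast H^{\prime}$ sitting beyond time $p$ disappears, so only the equilibrium baseline fixed by the limiting value $H(\infty)$ survives. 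Collecting the three limits reproduces the stated formula.

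The main obstacle I anticipate is the quasi-static limit together with the accompanying sign bookkeeping. Making rigorous the passage ``slow process $\Rightarrow$ equilibrium response'' requires splitting $\sigma(L_{p}^{r}\ast H^{\prime})$ (and its microstress analogues) into instantaneous part, equilibrium part and transient remainder, then showing by dominated convergence — using integrability of $\dot{G}_{AB}$ and the uniform boundedness of the histories — that the transient contributes nothing in the limit while the equilibrium part integrates exactly to the endpoint-energy difference. The orientation of the process time within a prolongation, namely that the oldest end of $L_{p}$ is glued to $H(p)$ and the newest to $H^{\prime}(0)$, combined with the non-standard algebraic sign of $w(H)$ flagged just after its definition, is precisely what fixes the sign of the term $\tfrac{1}{2}(\sum_{A,B}G_{AB}(\infty)B(\infty)\cdot B(\infty)-\sum_{A,B^{\prime}}G_{AB^{\prime}}(\infty)B^{\prime}(0)\cdot B^{\prime}(0))$, and I would verify this orientation explicitly rather than trust a formal manipulation. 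Once these points are settled, assembling the limit of the additive decomposition is routine.
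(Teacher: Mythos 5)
Your overall strategy is the one the paper itself points to: the paper gives no self-contained argument but asserts that Del Piero's proof in \cite{DP} adapts directly, and that proof is precisely your decomposition. Your treatment of the distance claim is correct (the glued history agrees with $H$ on $[0,p)$, so every integrand in $d$ is supported on $[p,+\infty )$ and bounded by $2M\sup_{t>0}\int_{p}^{+\infty }\left\vert \dot{G}_{AB}(s+t)\right\vert ds\rightarrow 0$); the additive splitting $w(H_{p}\ast L_{p}\ast H^{\prime })=w(H_{p},L_{p}\ast H^{\prime })+w(L_{p},H^{\prime })+w(H^{\prime })$ is legitimate; and the fading-memory treatment of $w(H_{p},L_{p}\ast H^{\prime })\rightarrow w(H)$ is right in substance, provided you also note the vanishing of the tail $\int_{p}^{+\infty }$ of the $w(H)$ integral and split $[0,p)$ near the junction $s\approx p$, where the fading-memory bound degenerates (both points follow from $H$ being $BV$ and $\left\vert \dot{G}_{AB}\right\vert $ integrable).

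The gap is exactly the item you flagged and then postponed, and your provisional bookkeeping of it is inverted. With $s$ measuring time into the past, $L_{p}(0)=H(p)$ is the \emph{newest} end of the ramp and $L_{p}(p)^{-}=H^{\prime }(0)$ the \emph{oldest}: chronologically the ramp runs from $H^{\prime }(0)$ to $H(p)$, the opposite of what you wrote. If you now execute your quasi-static computation with the paper's own definition of $w$ (past-time parametrization with the plus sign, which the paper explicitly notes is \emph{opposite} to the sign used in \cite{DP}), dominated convergence yields $\lim_{p\rightarrow +\infty }w(L_{p},H^{\prime })=\sum_{A,B}G_{AB}(\infty )\frac{1}{2}\left( B(\infty )+B^{\prime }(0)\right) \cdot \left( B^{\prime }(0)-B(\infty )\right) $, which, granted symmetry of the equilibrium moduli, equals
\begin{equation*}
\frac{1}{2}\left( \sum_{A,B}G_{AB}(\infty )B^{\prime }(0)\cdot B^{\prime
}(0)-\sum_{A,B}G_{AB}(\infty )B(\infty )\cdot B(\infty )\right) ,
\end{equation*}
the \emph{negative} of the term in the statement. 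A one-line check: for scalar $H\equiv c$ and $H^{\prime }=0^{\dagger }$, the stress along the ramp is $\frac{c}{p}\int_{0}^{2p-s}G(u)du$, hence $w(H_{p}\ast L_{p}\ast 0^{\dagger })\rightarrow -\frac{1}{2}G(\infty )c^{2}$, whereas the stated formula gives $+\frac{1}{2}G(\infty )c^{2}$. In other words, the formula as written carries Del Piero's sign convention for $w$ (the chronological work); under the paper's convention the quadratic term flips sign. Your two slips, the inverted orientation followed by the sign flip of $w$, would cancel and land you on the stated formula, but that is not a proof: carried out correctly, your plan delivers the statement only after either adopting \cite{DP}'s convention for $w$ or correcting the sign of the quadratic term, and this must be settled explicitly rather than left as an anticipated obstacle. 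Finally, reducing $\sum_{A,B}G_{AB}(\infty )\frac{1}{2}(B+B^{\prime })\cdot (B^{\prime }-B)$ to a difference of quadratic forms requires the full equilibrium matrix $\left[ G_{AB}(\infty )\right] $, cross couplings included, to be symmetric; you invoke this symmetry, but it is nowhere assumed in the paper, so it must be stated as a hypothesis of the lemma.
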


\begin{definition}
A history $H$ is said to be $w-$approachable from another history $H^{\prime
}$ if $H$ is approachable from $H^{\prime }$ and the sequence $\left(
p\longmapsto K_{p},p\in \mathbb{R}^{+}\right) $ is such that the sequence $%
p\longmapsto w\left( K_{p},H\right) $ converges too.
\end{definition}

\begin{theorem}
If $\nu $ can be freely selected in the linear space in which the manifold
of substructural shapes is isometrically embedded, the space $\Gamma $ is
closed under $w-$approachability.
\end{theorem}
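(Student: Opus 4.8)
The plan is to show that this statement is an essentially immediate consequence of the Lemma just proved, which already supplies, for any two histories, an explicit approaching sequence together with the limit of its work. Concretely, I would fix arbitrary $H,H'\in\Gamma$ and aim to exhibit a family of processes $K_p$ prolonging $H'$ that realizes $w$-approachability of $H$ from $H'$. Since $H$ and $H'$ are then arbitrary, this shows that no history escapes the reach of such approaching sequences, which is the content of the assertion that $\Gamma$ is closed under $w$-approachability.

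First I would take the interpolating process $L_p$ of the Lemma and set $K_p:=H_p\ast L_p$, with $H_p$ the process generated by $H$ over $[0,p)$, so that $K_p\ast H'=H_p\ast L_p\ast H'$. Before invoking the Lemma I would check that this concatenation is a genuine prolongation of $H'$: the terminal matching required for differentiability, $L_p(p)^-=(W'(0),\nu'(0),N'(0))=H'(0)$, holds by construction, while the internal junction between $H_p$ and $L_p$ matches because $L_p(0)=(W(p),\nu(p),N(p))=H(p)$. Both are immediate from the explicit form of $L_p$, so $K_p$ is an admissible process prolonging $H'$.

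The two conclusions of the Lemma then do the work. Its first conclusion, $\lim_{p\to+\infty}d(H_p\ast L_p\ast H',H)=0$, says exactly that $K_p\ast H'$ converges to $H$ in the pseudometric $d$, which is the defining condition for $H$ to be approachable from $H'$. Its second conclusion exhibits a finite limit for $w(H_p\ast L_p\ast H')$; recalling that $w(K_p,\cdot)=w(K_p\ast\cdot)-w(\cdot)$, this yields convergence of the work density over the prolongation, which is precisely the extra requirement distinguishing $w$-approachability from plain approachability. Combining the two, $H$ is $w$-approachable from $H'$, and the arbitrariness of the pair delivers the theorem.

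I expect the only genuinely delicate point to be the admissibility of the interpolation $L_p$, and this is exactly where the hypothesis that $\nu$ may be freely selected in the ambient linear space is indispensable: $L_p$ interpolates \emph{linearly} between $\nu(p)$ and $\nu'(0)$ inside the linear space in which $\mathcal{M}$ is isometrically embedded, and since $\mathcal{M}$ need not be geodesically convex, the intermediate values need not lie on $\mathcal{M}$ itself. Without the freeness assumption $L_p$ would fail to be an admissible process and the whole construction would collapse; with it, the interpolation is legitimate and the Lemma applies verbatim. Everything else is bookkeeping on the concatenations and the straightforward translation between the Lemma's statement about $w(H_p\ast L_p\ast H')$ and the definition of $w$-approachability.
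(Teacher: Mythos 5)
Your proposal is correct and takes essentially the same route as the paper: both proofs consist in observing that the Lemma's interpolating construction $H_{p}\ast L_{p}\ast H^{\prime }$ simultaneously delivers convergence in the pseudometric $d\left( \cdot ,\cdot \right) $ (hence approachability) and convergence of the work over the prolongation (hence $w$-approachability), with the freeness of $\nu $ in the embedding space needed exactly to legitimize the linear interpolation $L_{p}$. The only difference is notational bookkeeping: the paper replaces $H_{p}\ast L_{p}$ by the translated process $K_{2p}:=\left( H_{p}-H^{\prime }\left( 0\right) _{p}^{\dagger }\right) \ast \left( L_{p}-H^{\prime }\left( 0\right) _{p}^{\dagger }\right) $, a device inherited from Del Piero's relative continuations, whereas you use the strict continuation directly and verify the junction conditions, which is equally valid under the paper's stated conventions.
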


The proofs of both the previous lemma and the approachability theorem follow
the same paths of the analogous results for simple bodies in \cite{DP}. The
circumstance that $\nu $ is now selected in a linear space implies just
that, in re-following the path of the proofs in \cite{DP}, one needs only to
consider the distance $d\left( \cdot ,\cdot \right) $ and the presence of
the substructural terms.

In order to prove the closure theorem (under $w-$approachability) the key
point is the use of the last result of the Lemma. In fact, one replaces $%
H_{p}\ast L_{p}$ with 
\begin{equation*}
K_{2p}:=\left( H_{p}-H^{\prime }\left( 0\right) _{p}^{\dagger }\right) \ast
\left( L_{p}-H^{\prime }\left( 0\right) _{p}^{\dagger }\right) ,
\end{equation*}%
where $H^{\prime }\left( 0\right) _{p}^{\dagger }$ is the constant history
of value $H^{\prime }\left( 0\right) $ and duration $p$, then one proves by
Lemma that the work expended along the continuation $K_{2p}$, namely $%
w\left( K_{2p},H\right) $ converges to $w\left( H\right) $ plus the work
done in the extreme retardation (see \cite{DP}) from $H^{\prime }\left(
0\right) $ to $H\left( \infty \right) $.

The work $w$ helps also in characterizing the kernels in the constitutive
expressions of the interaction measures.

\begin{definition}
The relaxation functions $s\longmapsto G_{AB}\left( s\right) $ are said to
be dissipative if%
\begin{equation*}
w\left( H\right) \geq 0
\end{equation*}%
for any $H\in \Gamma $ such that $F\left( \infty \right) =0,$ $\nu \left(
\infty \right) =0,$ $N\left( \infty \right) =0$.
\end{definition}

Note that the requirement of the positivity of the work is weaker with
respect

\subsection{Relaxed work}

\begin{definition}
For every pair of $w-$approachable histories $H$ and $H^{\prime }$, the
relaxed work $w_{H^{\prime }}^{r}\left( H\right) $ along $H$, starting from $%
H^{\prime }$, is defined by%
\begin{equation*}
w_{H^{\prime }}^{r}\left( H\right) :=\inf \left\{ \liminf_{p\rightarrow
+\infty }w(K_{p},H^{\prime })\text{ }|\text{ }K_{p}\in \Pi
,\lim_{p\rightarrow +\infty }d(K_{p}\ast H^{\prime },H)=0\right\} .
\end{equation*}
\end{definition}

\begin{definition}
If $H$ is $w-$approachable from $H^{\prime }$, $w_{H^{\prime }}^{r}\left(
H\right) $ is called the \textbf{minimum work} performed from $H^{\prime }$
to $H$ while $-w_{H^{\prime }}^{r}\left( H\right) $ is called the \textbf{%
maximum recoverable work}.
\end{definition}

The closure of the state space under approachability justifies the
definition above. Moreover, an estimate follows:%
\begin{equation*}
w_{H^{\prime }}^{r}\left( H\right) \leq w\left( H\right) +\frac{1}{2}%
(\sum_{A,B}G_{AB}(\infty )B(\infty )\cdot B(\infty )-\sum_{A,B^{\prime
}}G_{AB^{\prime }}(\infty )B^{\prime }(0)\cdot B^{\prime }(0)).
\end{equation*}%
It means that the relaxed power along $H$, starting from $H^{\prime }$, is
bounded from above by the power along $H$ plus the difference of the powers
under extreme retardation from $H^{\prime }\left( 0\right) $ to $H\left(
\infty \right) $. By restricting $w_{H}^{r}\left( \cdot \right) $ to the
histories $K$ that are prolongations of $H$ itself, one gets an additional
upper bound:%
\begin{equation*}
w_{H}^{r}\left( K^{\dagger }\ast H\right) \leq w\left( K^{\dagger },H\right)
.
\end{equation*}

\begin{theorem}
The following statements hold:
\end{theorem}

\begin{enumerate}
\item \emph{Both }$w_{H^{\prime }}^{r}\left( \cdot \right) $ \emph{and }$%
w_{\left( \cdot \right) }^{r}\left( H\right) $ \emph{are state functions.}

\item (Sub-additivity.) \emph{For H, H' and H\textquotedblright\ histories
such that }$w_{H^{\prime }}^{r}\left( H\right) >-\infty $ \emph{and }$%
w_{H^{\prime \prime }}^{r}\left( H^{\prime }\right) >-\infty $ \emph{one
gets the triangular inequality}%
\begin{equation*}
w_{H^{\prime \prime }}^{r}\left( H\right) \leq w_{H^{\prime \prime
}}^{r}\left( H^{\prime }\right) +w_{H^{\prime }}^{r}\left( H\right) .
\end{equation*}

\item (Lower semicontinuity.) \emph{If }$w_{H^{\prime }}^{r}\left( H\right)
>-\infty $\emph{, }$w_{H^{\prime }}^{r}\left( \cdot \right) $ \emph{is lower
semicontinuous\footnote{%
Precisely, for any $\varepsilon >0$ there exists $\delta >0$ such that,
given $H$, for any $H_{1}$ such that $d\left( H_{1},H\right) <\delta $, one
gets 
\begin{equation*}
w_{H^{\prime }}^{r}\left( H_{1}\right) \geq w_{H^{\prime }}^{r}\left(
H\right) -\varepsilon .
\end{equation*}%
}.}

\item (Dissipation inequality.) \emph{If }$w_{H^{\prime }}^{r}\left(
H\right) >-\infty $\emph{\ and }$K_{p}$ \emph{is a process such that }$%
w_{H}^{r}\left( K_{p}\ast H\right) >-\infty $\emph{, then}%
\begin{equation*}
w_{H^{\prime }}^{r}\left( K_{d}\ast H\right) -w_{H^{\prime }}^{r}\left(
H\right) \leq w\left( K_{p}\ast H\right) ,
\end{equation*}%
\emph{moreover, if }$K_{p}$ \emph{is such that }$w_{H^{\prime }}^{r}\left(
K_{p}\ast H\right) >-\infty $ \emph{and }$w_{K_{p}\ast H^{\prime
}}^{r}\left( H\right) >-\infty $ \emph{then}%
\begin{equation*}
w_{H^{\prime }}^{r}\left( H\right) -w_{K_{p}\ast H^{\prime }}^{r}\left(
H\right) \leq w\left( K_{p},H^{\prime }\right) .
\end{equation*}
\end{enumerate}

The proof of the theorem above is essentially independent of the explicit
expression of the work. For this reason the proof of the analogous result in 
\cite{DP} applies providing one substitutes the norm used there with the
distance defined above. Other results in \cite{DP} can be adapted here. Such
results are listed below. Differences rest essentially on (i) the presence
of substructural terms, (ii) the use of the distance $d\left( \cdot ,\cdot
\right) $ and (iii) the use of strict continuations of histories, not the
relative continuations used in \cite{DP}. The latter are indicated by $R$
superposed to $\ast $ and are defined by%
\begin{equation*}
(K_{p}\overset{R}{\ast }H)(s):=\left\{ 
\begin{array}{ll}
K_{p}(s)+H\left( 0\right) & 0\leq s<d \\ 
H(s-d) & s\geq d%
\end{array}%
\right. .
\end{equation*}

\begin{theorem}
The following statements are equivalent:
\end{theorem}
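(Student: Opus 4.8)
The plan is to read the three equivalent conditions in the standard Coleman--Owen--Del Piero form---(a) the work $w(\cdot,\cdot)$ satisfies the dissipation property along every history; (b) there exists a lower potential (a free energy) for $w$; (c) the relaxed work $w^{r}_{H'}(H)$ is bounded below on every $w$-approachable pair---and to prove their equivalence by closing the cycle $(b)\Rightarrow(a)\Rightarrow(c)\Rightarrow(b)$. The whole argument is designed to run on the structural results already in hand: the fact that $w^{r}_{H'}(\cdot)$ and $w^{r}_{(\cdot)}(H)$ are state functions, the sub-additivity (triangular) inequality and lower semicontinuity of $w^{r}$, the upper bound $w^{r}_{H}(K^{\dagger}\ast H)\le w(K^{\dagger},H)$, the Lipschitz continuity of $w(K,\cdot)$, the retardation Lemma, and the contraction and fading-memory properties of $d(\cdot,\cdot)$.

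First I would clear the trivial implication $(b)\Rightarrow(a)$. Given a lower potential $f$, fix a history $H$ and apply the defining inequality of a lower potential with the two slots both equal to $H$: for each $\varepsilon>0$ there is $\delta>0$ so that $d(K\ast H,H)<\delta$ forces $f(H)-f(H)=0<w(K,H)+\varepsilon$, i.e.\ $w(K,H)>-\varepsilon$. Reading this for $K=K_{p}$ is exactly the dissipation property, so nothing further is needed here.

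The implication $(a)\Rightarrow(c)$ is where I expect the real difficulty. Starting from any process $K_{p}$ with $K_{p}\ast H'$ approaching $H$, the dissipation property cannot be applied to $K_{p}$ itself because $K_{p}$ is not almost cyclic at $H'$. The remedy is to append a fixed ``return'' process $J$---manufactured by the linear interpolation of the retardation Lemma---carrying the state from near $H$ back to near $H'$, so that the composite $J\ast K_{p}$ is an almost-cycle at $H'$ with $d((J\ast K_{p})\ast H',H')<\delta$. Additivity of $w$ over prolongations then gives $w(J\ast K_{p},H')=w(K_{p},H')+w(J,K_{p}\ast H')$; the dissipation property bounds the left-hand side below by $-\varepsilon$, while the Lipschitz continuity of $w(J,\cdot)$ bounds the correction $w(J,K_{p}\ast H')$ by a constant $C$ independent of $p$. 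Hence $w(K_{p},H')\ge-\varepsilon-C$ uniformly, and passing to the infimum yields $w^{r}_{H'}(H)>-\infty$. The delicate point is to make the return process and the closeness in $d$ uniform in $p$; this is precisely where the fading-memory estimate is invoked, together with the work formula of the Lemma to keep the correction term bounded.

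Finally $(c)\Rightarrow(b)$ exhibits $w^{r}$ itself as the free energy. Fixing a reference history $H_{0}$ and setting $f(\cdot):=w^{r}_{H_{0}}(\cdot)$, the state-function property makes $f$ well defined on $\Gamma/\!\sim$, and the sub-additivity inequality with $H''=H_{0}$ reads $f(H)\le f(H')+w^{r}_{H'}(H)$. For a process $K$ prolonging $H'$ with $d(K\ast H',H)<\delta$, the upper bound $w^{r}_{H'}(K\ast H')\le w(K,H')$ combined with the lower semicontinuity of $w^{r}_{H'}(\cdot)$ gives $w^{r}_{H'}(H)\le w(K,H')+\varepsilon$; chaining the two inequalities produces $f(H)-f(H')\le w(K,H')+\varepsilon$, which is the lower-potential inequality. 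The same computation shows $w^{r}$ to be the minimal free energy, since any competing lower potential must dominate it through $(b)\Rightarrow(a)$ and the dissipation inequality. As in \cite{DP}, the skeleton of the proof is unchanged; the only substantive modifications are the systematic replacement of the seminorm of \cite{DP} by the pseudometric $d(\cdot,\cdot)$ and the carrying along of the substructural contributions $z\cdot\dot{\nu}$ and $\mathcal{S}\cdot\dot{N}$ in every work integral.
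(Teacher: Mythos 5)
Your proposal proves the wrong equivalence. The theorem you were asked about is the seven-way equivalence whose items are listed immediately after the statement in the paper: (1) $w$ satisfies the dissipation property on all \emph{constant} histories; (2) $w^{r}_{H^{\dagger}}(H^{\dagger})=0$ for every constant history $H^{\dagger}$; (3) $w^{r}_{H^{\dagger}}(H)\geq 0$ for every history $H$; (4) the explicit quadratic lower bound $w(H)\geq\frac{1}{2}\bigl(\sum_{A,B}G_{AB}(\infty)B(0)\cdot B(0)-\sum_{A,B}G_{AB}(\infty)B(\infty)\cdot B(\infty)\bigr)$ for every history; (5) the relaxation functions $s\longmapsto G_{AB}(s)$ are dissipative; (6) $w^{r}_{H'}(H)\geq 0$ for every pair of histories; (7) $w$ satisfies the dissipation property for every history. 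Your conditions (a), (b), (c) are not these: ``existence of a lower potential'' is not on the paper's list at all (it belongs to the later theorem on free energies), and your condition (c) asks only $w^{r}_{H'}(H)>-\infty$, which is strictly weaker than the nonnegativity required in items (3) and (6). So even if your cycle $(b)\Rightarrow(a)\Rightarrow(c)\Rightarrow(b)$ were airtight, it would not establish the paper's theorem.

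Two pieces of genuine mathematical content in the paper's theorem are entirely absent from your argument. First, the bootstrap from the dissipation property on \emph{constant} histories (item 1) to the dissipation property on \emph{every} history (item 7): this is the hard direction, and it needs the retardation Lemma --- the device of replacing $H_{p}\ast L_{p}$ by the shifted process $K_{2p}$ --- to reduce an arbitrary almost-cycle to one based at a constant history. Second, the quantitative items (4) and (5) cannot be reached by abstract potential-theoretic reasoning of the kind you use: they depend on the explicit linear integral representation of $\sigma$, $z$ and $\mathcal{S}$ through the relaxation functions, specifically on the work formula of the Lemma under extreme retardation and on the evaluation of the relaxed work on constant histories, which is what produces the terms $\frac{1}{2}\sum_{A,B}G_{AB}(\infty)B\cdot B$. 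For the record, the paper does not write out the proof either: it observes that Del Piero's proof in linear viscoelasticity carries over once the seminorm there is replaced by the pseudometric $d(\cdot,\cdot)$, strict (rather than relative) continuations are used, and the substructural terms $z\cdot\dot{\nu}$ and $\mathcal{S}\cdot\dot{N}$ are carried along in every work integral --- exactly the adaptations you describe at the end of your proposal, but they must be applied to Del Piero's seven statements, not to the three you substituted for them.
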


\begin{enumerate}
\item $w$ \emph{satisfies the dissipation property on all constant histories.%
}

\item $w_{H^{\dagger }}^{r}\left( H^{\dagger }\right) =0$ \emph{along every
constant history }$H^{\dagger }$.

\item $w_{H^{\dagger }}^{r}\left( H\right) \geq 0$ \emph{for every history }$%
H$.

\item \emph{For any history }$H$\emph{\ one gets}%
\begin{equation*}
w(H)\geq \frac{1}{2}(\sum_{A,B}G_{AB}(\infty )B(0)\cdot B(0)-\sum_{{A,B}%
}G_{AB}(\infty )B(\infty )\cdot B(\infty )).
\end{equation*}

\item \emph{The relaxation functions} $s\longmapsto G_{AB}\left( s\right) $ 
\emph{are dissipative.}

\item $w_{H^{\prime }}^{r}\left( H\right) \geq 0$ \emph{for every pair of
histories }$H$ \emph{and }$H^{\prime }$\emph{.}

\item $w$\emph{\ satisfies the dissipation property for every history }$H$%
\emph{.}
\end{enumerate}

\begin{theorem}
The following statements hold:
\end{theorem}

\begin{enumerate}
\item \emph{For every history }$H$\emph{\ the minimum power performed and
the maximum recoverable work from }$H$\emph{\ to }$H$\emph{\ are zero.}

\item \emph{For every pair of histories }$H$\emph{\ and }$H^{\prime }$%
\begin{equation*}
w_{H^{\prime }}^{r}\left( H\right) \geq -w_{H}^{r}\left( H^{\prime }\right) .
\end{equation*}

\item \emph{For every constant history }$H^{\dagger }$\emph{, }$%
w_{H^{\dagger }}^{r}\left( \cdot \right) $\emph{\ and }$w_{\left( \cdot
\right) }^{r}\left( H^{\dagger }\right) $\emph{\ are determined respectively
by }$w_{0^{\dagger }}^{r}\left( \cdot \right) $\emph{\ and }$w_{\left( \cdot
\right) }^{r}\left( 0^{\dagger }\right) $\emph{, namely}%
\begin{equation*}
w_{H^{\dagger }}^{r}\left( H\right) =w_{0^{\dagger }}^{r}\left( H\right) -%
\frac{1}{2}\sum_{A,B}G_{AB}\left( \infty \right) B_{H^{\dagger }}\cdot
B_{H^{\dagger }},
\end{equation*}%
\begin{equation*}
w_{H}^{r}\left( H^{\dagger }\right) =w_{H}^{r}\left( 0^{\dagger }\right) +%
\frac{1}{2}\sum_{A,B}G_{AB}\left( \infty \right) B_{H^{\dagger }}\cdot
B_{H^{\dagger }}.
\end{equation*}

\item \emph{The restriction of }$w_{\left( \cdot \right) }^{r}\left( \cdot
\right) $\emph{\ to constant histories is determined: for every pair of
constant histories }$H_{1}^{\dagger }$\emph{\ and }$H_{2}^{\dagger }$ \emph{%
one gets}%
\begin{equation*}
w_{H_{1}^{\dagger }}^{r}\left( H_{2}^{\dagger }\right) =\frac{1}{2}%
(\sum_{A,B}G_{AB}\left( \infty \right) B_{H_{2}^{\dagger }}\cdot
B_{H_{2}^{\dagger }}-\sum_{A,B}G_{AB}\left( \infty \right) B_{H_{1}^{\dagger
}}\cdot B_{H_{1}^{\dagger }}).
\end{equation*}

\item \emph{For every pair of histories }$H$\emph{\ and }$H^{\prime }$%
\begin{equation*}
w_{H^{\prime }}^{r}\left( H\right) \geq w_{H^{\prime }}^{r}\left( H\left(
0\right) ^{\dagger }\right) =w_{H^{\prime }}^{r}(0^{\dag })+\frac{1}{2}%
\sum_{A,B}G_{AB}(\infty )B(0)\cdot B(0).
\end{equation*}

\item \emph{For every }$H$%
\begin{equation*}
-w_{H}^{r}(0^{\dag })\geq \frac{1}{2}\sum_{A,B}G_{AB}(\infty )B_{H(0)}\cdot
B_{H(0)}.
\end{equation*}

\item \emph{For every pair of histories }$H$\emph{\ and }$H^{\prime }$%
\begin{equation*}
w_{H^{\prime }}^{r}\left( H\right) \geq -w_{H^{\prime }}^{r}(0^{\dag })+%
\frac{1}{2}\sum_{A,B}G_{AB}(\infty )B_{H(0)}\cdot B_{H(0)}.
\end{equation*}

\item \emph{For every pair of histories }$H$\emph{\ and }$H^{\prime }$%
\begin{equation*}
w_{0^{\dag }}^{r}\left( H\right) \geq w_{H^{\prime }}^{r}\left( H\right) .
\end{equation*}

\item \emph{For every }$H$%
\begin{equation*}
w_{H^{\prime }}^{r}(0^{\dag })=\inf_{K\in \Pi }w(K,H).
\end{equation*}

\item \emph{For every constant history }$H^{\dagger }$\emph{, the functional 
}$w_{\left( \cdot \right) }^{r}\left( H^{\dagger }\right) $\emph{\ is upper
semicontinuous.}
\end{enumerate}

The symbols $B_{H^{\dagger }}$ and $B_{H\left( 0\right) }$ used earlier
indicate one of the entries of the list defining the state, evaluated along
the constant history $H^{\dagger }$ or at $H\left( 0\right) $, respectively.

\subsection{Free energies}

The free energy can be defined in terms of actions (see \cite{CO1}). There
are several possible free energies. Upper and lower bounds for their set can
be determined.

\begin{definition}
Every lower potential of $w$ is called a free energy.
\end{definition}

The properties of the power discussed above allow one to prove the following
theorem.

\begin{theorem}
The following assertions hold:
\end{theorem}

\begin{enumerate}
\item \emph{Every free energy }$\psi $\emph{\ satisfies the dissipation
inequality}%
\begin{equation*}
\psi \left( K\ast H\right) -\psi \left( H\right) <w\left( K,H\right)
\end{equation*}%
\emph{for every }$H\in \Gamma $\emph{\ and every compatible }$K\in \Pi $%
\emph{. Moreover, every l.s.c. function }$\psi :\Gamma \rightarrow \mathbb{R}
$\emph{\ that satisfies the dissipation inequality is a free energy.}

\item \emph{If the dissipation postulate is satisfied, then, for every }$H$%
\emph{\ and }$H^{\prime }$\emph{\ belonging to }$\Gamma $\emph{, the maps }$%
w_{H}^{r}\left( \cdot \right) $\emph{\ and }$-w_{\left( \cdot \right)
}^{r}\left( H^{\prime }\right) $\emph{\ are free energies. Moreover, for
every free energy }$\psi $\emph{\ one gets}%
\begin{equation*}
-w_{H}^{r}\left( H^{\prime }\right) \leq \psi \left( H\right) -\psi \left(
H^{\prime }\right) \leq w_{H}^{r}\left( H^{\prime }\right)
\end{equation*}%
\emph{for arbitrary histories }$H$\emph{\ and }$H^{\prime }$\emph{. In
particular, if there is }$H^{\prime }$\emph{\ and a family of free energies }%
$\psi _{s}$\emph{\ such that }$\psi _{s}\left( H^{\prime }\right) =0$\emph{,
the maps }$w_{H^{\prime }}^{r}\left( \cdot \right) $\emph{\ and }$-w_{\left(
\cdot \right) }^{r}\left( H^{\prime }\right) $\emph{\ are the maximum and
the minimum free energies in such a family.}

\item \emph{Each free energy is a state function and, for every }$H$\emph{,
it satisfies the inequality}%
\begin{equation*}
\psi \left( H\left( 0\right) ^{\dag }\right) \leq \psi \left( H\right) .
\end{equation*}%
\emph{In particular, the restriction of the free energy to constant
histories is given by}%
\begin{equation*}
\psi \left( H^{\dag }\right) -\psi \left( 0^{\dag }\right) =\frac{1}{2}%
\sum_{A,B}G_{AB}\left( \infty \right) B_{H^{\dag }}\cdot B_{H^{\dag }}.
\end{equation*}
\end{enumerate}

The technique of the proof is strictly analogous (modulo the variations
associated with the use of the metric $d\left( \cdot ,\cdot \right) $) to
the one used in \cite{DP} for an analogous result for simple viscoelastic
bodies (see also \cite{CO1}), so the details of the proof are not reported
here. In \cite{DP} a weaker condition is adopted: discontinuity is admitted
at $0$ under relative continuations. Here, the need of the use of the chain
rule in an ensuing section suggests to avoid this discontinuity for the sake
of simplicity.

\section{Characterization of the surface free energy in terms of the surface
work}

The propositions presented so far can be extended in presence of structured
discontinuity surfaces. In particular, it is assumed that across $\Sigma $
the map $\nu $ is continuous while the gradients $W$ and $N$ suffer bounded
jumps.

The attention is focused on constitutive relations of the type

\begin{equation*}
\mathbb{T}(\mathbb{H})=G_{\mathbb{TW}}(0)\mathbb{W}(0)+G_{\mathbb{T}\nu
}(0)\nu (0)+G_{\mathbb{TN}}(0)\mathbb{N}(0)+
\end{equation*}

\begin{equation*}
+\int_{0}^{+\infty }\left[ \dot{G}_{\mathbb{TW}}(s)\mathbb{W}(s)+\dot{G}_{%
\mathbb{T}\nu }(s)\nu (s)+\dot{G}_{\mathbb{TN}}(s)\mathbb{N}(s)\right] \text{
}ds,
\end{equation*}

\begin{equation*}
\mathbb{S}(\mathbb{H})=G_{\mathbb{SW}}(0)\mathbb{W}(0)+G_{\mathbb{S}\nu
}(0)\nu (0)+G_{\mathbb{SN}}(0)\mathbb{N}(0)+
\end{equation*}

\begin{equation*}
+\int_{0}^{+\infty }\left[ \dot{G}_{\mathbb{SW}}(s)\mathbb{W}(s)+\dot{G}_{%
\mathbb{S}\nu }(s)\nu (s)+\dot{G}_{\mathbb{SN}}(s)\mathbb{N}(s)\right] \text{
}ds,
\end{equation*}

\begin{equation*}
\mathfrak{z}(\mathbb{H})=G_{\mathfrak{z}\mathbb{W}}(0)\mathbb{W}(0)+G_{%
\mathfrak{z}\nu }(0)\nu (0)+G_{\mathfrak{z}\mathbb{n}}(0)\mathbb{W}(0)+
\end{equation*}

\begin{equation*}
+\int_{0}^{+\infty }\left[ \dot{G}_{\mathfrak{z}\mathbb{W}}(s)\mathbb{W}(s)+%
\dot{G}_{\mathfrak{z}\nu }(s)\nu (s)+\dot{G}_{\mathfrak{z}\mathbb{N}}(s)%
\mathbb{N}(s)\right] \text{ }ds,
\end{equation*}%
for the surface stresses $\mathbb{T}$, $\mathbb{S}$, and the surface
self-action $\mathfrak{z}$.

\begin{definition}
Two surface histories $\mathbb{H}$ and $\mathbb{H}^{\prime }$, such that $%
\mathbb{H}\left( 0\right) =\mathbb{H}^{\prime }\left( 0\right) $\ , are said
to be equivalent if%
\begin{equation*}
\mathbb{T}(\mathbb{K}_{p}\ast \mathbb{H})=\mathbb{T}(\mathbb{K}_{p}\ast 
\mathbb{H}^{\prime }),\text{ \ \ }\mathbb{S}(\mathbb{K}_{p}\ast \mathbb{H})=%
\mathbb{S}(\mathbb{K}_{p}\ast \mathbb{H}^{\prime }),\text{\ \ }\mathfrak{z}(%
\mathbb{K}_{p}\ast \mathbb{H})=\mathfrak{z}(\mathbb{K}_{p}\ast \mathbb{H}%
^{\prime }),
\end{equation*}%
for every prolongation $\mathbb{K}_{p}$.
\end{definition}

A distance $d_{\Sigma }(\mathbb{H},\mathbb{H}^{\prime })$ between surface
histories can be defined by

\begin{equation*}
d_{\Sigma }(\mathbb{H},\mathbb{H}^{\prime }):=\sup_{t>0}\left\{ \left\vert
\int_{0}^{+\infty }\left[ \dot{G}_{\mathbb{TW}}(s+t)\mathbb{W}(s)-\dot{G}_{%
\mathbb{TW}}(s+t)\mathbb{W}^{\prime }(s)+\dot{G}_{\mathbb{T}\nu }(s+t)\nu
(s)-\right. \right. \right.
\end{equation*}

\begin{equation*}
\left. \left. \left. -\dot{G}_{\mathbb{T}\nu }(s+t)(s)\nu ^{\prime }(s)+\dot{%
G}_{\mathbb{TN}}(s+t)\mathbb{N}(s)-\dot{G}_{\mathbb{TN}}(s+t)\mathbb{N}%
^{\prime }(s)\right] ds\right\vert +\right.
\end{equation*}

\begin{equation*}
\left. +\left\vert \int_{0}^{+\infty }\left[ \dot{G}_{\mathfrak{z}\mathbb{W}%
}(s+t)\mathbb{W}(s)-\dot{G}_{\mathfrak{z}\mathbb{W}}(s+t)\mathbb{W}^{\prime
}(s)+\dot{G}_{\mathfrak{z}\nu }(s+t)\nu (s)-\right. \right. \right.
\end{equation*}

\begin{equation*}
\left. \left. \left. -\dot{G}_{\mathfrak{z}\nu }(s+t)\nu ^{\prime }(s)+\dot{G%
}_{\mathfrak{z}\mathbb{N}}(s+t)\mathbb{N}(s)-\dot{G}_{\mathfrak{z}\mathbb{N}%
}(s+t)\mathbb{N}^{\prime }(s)\right] ds\right\vert +\right.
\end{equation*}

\begin{equation*}
\left. +\left\vert \int_{0}^{+\infty }\left[ \dot{G}_{\mathbb{SW}}(s+t)%
\mathbb{W}(s)-\dot{G}_{\mathbb{SW}}(s+t)\mathbb{W}^{\prime }(s)+\dot{G}_{%
\mathbb{S}\nu }(s+t)\nu (t)-\right. \right. \right.
\end{equation*}

\begin{equation*}
\left. \left. \left. -\dot{G}_{\mathbb{S}\nu }(s+t)\nu ^{\prime }(s)+\dot{G}%
_{\mathbb{SN}}(s+t)\mathbb{N}(s)-\dot{G}_{\mathbb{SN}}(s+t)\mathbb{N}%
^{\prime }(s)\right] ds\right\vert \right\} .
\end{equation*}

It is a semimetric on the space of surface histories and a metric over the
quotient space generated by the equivalence relation defined above.

\begin{proposition}
The distance $d_{\Sigma }\left( \cdot ,\cdot \right) $ has the following
properties:
\end{proposition}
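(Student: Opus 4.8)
The plan is to derive the three asserted properties---contraction, fading memory, and approachability---by transcribing, term by term, the proof of the bulk Proposition, since $d_{\Sigma}$ is built from the identical algebraic template as $d$: it merely replaces the bulk kernels by the nine surface relaxation functions $G_{AB}$ with $A\in\{\mathbb{T},\mathbb{S},\mathfrak{z}\}$ and $B\in\{\mathbb{W},\nu,\mathbb{N}\}$, and the bulk histories by the surface histories $\mathbb{H}(s)=(\mathbb{W}(s),\nu(s),\mathbb{N}(s))$, which take values in linear spaces thanks to the isometric embedding of $\mathcal{M}$. For the transcription to be legitimate I would first record that the surface kernels inherit the regularity already assumed for their bulk counterparts: each $s\longmapsto G_{AB}(s)$ is absolutely continuous, each $s\longmapsto|\dot{G}_{AB}(s)|$ is Lebesgue integrable, and the limit $G_{AB}(\infty)$ exists.

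First I would prove contraction. Writing $d_{\Sigma}(\mathbb{K}_p\ast\mathbb{H},\mathbb{K}_p\ast\mathbb{H}')$ explicitly and inserting the piecewise law of the surface prolongation (the analogue of (\ref{Cont}), together with the matching condition $\mathbb{K}_p(p)^-=\mathbb{H}(0)$), every integrand reduces to a term of the type $\dot{G}_{AB}(s+t)B(s-p)$. The one computation that drives the estimate is the time-translation identity $\dot{G}_{AB}(s+t)B(s-p)=\dot{G}_{AB}(s+t+p)B(s)$, obtained by the change of variable $s\mapsto s-p$; taking the supremum over $t$ and then over $p>0$ yields $d_{\Sigma}(\mathbb{K}_p\ast\mathbb{H},\mathbb{K}_p\ast\mathbb{H}')\leq d_{\Sigma}(\mathbb{H},\mathbb{H}')$ immediately.

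Next I would prove fading memory. Setting $M:=\max\{\sup_{s>0}|\mathbb{W}(s)|,\sup_{s>0}|\mathbb{W}'(s)|,\sup_{s>0}|\nu(s)|,\sup_{s>0}|\nu'(s)|,\sup_{s>0}|\mathbb{N}(s)|,\sup_{s>0}|\mathbb{N}'(s)|\}$ and using the elementary bound $|\int_a^b\dot{G}_{AB}(s)B(s)\,ds|\leq\sup_{[a,b]}|B|\int_a^b|\dot{G}_{AB}(s)|\,ds$, I would estimate $d_{\Sigma}(\mathbb{K}_p\ast\mathbb{H},\mathbb{K}_p\ast\mathbb{H}')\leq 2M\sup_{t>0}\sum_{A,B}\int_p^{+\infty}|\dot{G}_{AB}(s+t)|\,ds$. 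The integrability of each $|\dot{G}_{AB}|$ forces every tail $\int_p^{+\infty}|\dot{G}_{AB}(s+t)|\,ds$ below any prescribed threshold, uniformly in $t$, as soon as $p$ is large enough; this is exactly fading memory. Approachability then follows from fading memory by the same device as in the bulk case: since $\mathbb{H}$ itself splits as $\mathbb{H}_p$ followed by its own tail, the contraction identity rewrites $d_{\Sigma}(\mathbb{H}_p\ast\mathbb{H}',\mathbb{H})$ as a distance between two prolongations sharing the segment $\mathbb{H}_p$, which tends to zero as $p\to+\infty$.

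I expect no genuine analytical obstacle here, because the argument is structurally identical to the already-established bulk Proposition; the only point requiring attention is bookkeeping. Specifically, I would check that the surface prolongation $\mathbb{K}_p\ast\mathbb{H}$ obeys its own piecewise law with matching condition $\mathbb{K}_p(p)^-=\mathbb{H}(0)$---so that the change of variable underpinning contraction is valid---and that all nine surface kernels (rather than the fewer kernels one might carry by oversight) contribute tail integrals that vanish simultaneously, so that the single threshold in the fading-memory step controls the whole of $d_{\Sigma}$.
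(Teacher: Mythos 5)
Your proposal is correct and follows essentially the same route as the paper, which disposes of this proposition by declaring the proof analogous to the bulk Proposition (contraction via the time-translation identity for the kernels, fading memory via the bound by $2M$ times the tail integrals of the $|\dot{G}_{AB}|$, and approachability as a consequence of fading memory), with the constant $M$ now taken as the maximum of the suprema of the surface histories. Your explicit bookkeeping of the nine surface kernels and their inherited regularity is exactly the transcription the paper leaves implicit.
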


\begin{description}
\item[Contraction] \emph{for every} $r>0$%
\begin{equation*}
d_{\Sigma }(\mathbb{K}_{r}\ast \mathbb{H}^{\prime },\mathbb{K}_{p}\ast 
\mathbb{H})\leq d_{\Sigma }(\mathbb{H}^{\prime },\mathbb{H})\quad \forall
r\geq 0.
\end{equation*}

\item[Fading memory] \emph{for} $\varepsilon >0$ \emph{there exists} $\ell $ 
\emph{such that, for every} $p>\ell $\emph{, one gets}%
\begin{equation*}
d_{\Sigma }(\mathbb{K}_{p}\ast \mathbb{H}^{\prime },\mathbb{K}_{p}\ast 
\mathbb{H})<\varepsilon .
\end{equation*}

\item[Approachability] \emph{if }$\mathbb{H}_{p}\left( p\right) ^{-}=\mathbb{%
H}^{\prime }\left( 0\right) $\emph{, then}%
\begin{equation*}
\lim_{p\rightarrow +\infty }d_{\Sigma }(\mathbb{H}_{p}\ast \mathbb{H}%
^{\prime },\mathbb{H})=0,
\end{equation*}%
\emph{with} $\mathbb{H}_{p}$ \emph{the process generated by} $\mathbb{H}$ 
\emph{over} $[0,p)$.
\end{description}

The proof is analogous to the one of Proposition 1. In this case the
constant $M$ is the maximum of the suprema of the surface histories.

Surface state functions and surface actions can be then defined.

From Theorem 1 one realizes that the \emph{surface work density} $w^{\Sigma
} $\ is defined by%
\begin{equation*}
w^{\Sigma }:=\int_{0}^{+\infty }\left( \mathbb{T}\cdot \dot{\mathbb{W}}+%
\mathfrak{z}\cdot \left\langle \dot{\nu}\right\rangle +\mathbb{S}\cdot \dot{%
\mathbb{N}}\right) \text{ }dt+\int_{0}^{+\infty }\left( \langle \sigma
\rangle m\cdot \left[ \dot{y}\right] +\langle \mathcal{S}\rangle m\cdot %
\left[ \dot{\nu}\right] \right) \text{ }dt.
\end{equation*}%
It includes both peculiar surface interactions and traces of the bulk
stresses at the discontinuity surface itself. Previous work on the
instantaneous response of complex bodies with structured discontinuity
surfaces \cite{M02} suggests that only a \emph{reduced surface work density} 
$\hat{w}^{\Sigma }$ is in strict connection with the surface energy density:%
\begin{equation*}
\hat{w}^{\Sigma }:=\int_{0}^{+\infty }\left( \mathbb{T}\cdot \dot{\mathbb{W}}%
+\mathfrak{z}\cdot \left\langle \dot{\nu}\right\rangle +\mathbb{S}\cdot \dot{%
\mathbb{N}}\right) \text{ }dt.
\end{equation*}

The reduced surface work density over prolongations is defined by%
\begin{equation*}
\hat{w}^{\Sigma }\left( \mathbb{K}_{p},\mathbb{H}\right) :=\hat{w}^{\Sigma
}\left( \mathbb{K}_{p}\ast \mathbb{H}\right) -\hat{w}^{\Sigma }\left( 
\mathbb{H}\right) ,
\end{equation*}%
where $\mathbb{K}_{p}$\ is the surface counterpart of $K_{p}$.

By making use of the technique leading to Theorem 2, one may prove that $%
\hat{w}^{\Sigma }$ is an action and the map $\mathbb{H}\longmapsto \hat{w}%
^{\Sigma }\left( \mathbb{K}_{p},\mathbb{H}\right) $ is a state function.

A \emph{relaxed} \emph{surface work} $\hat{w}_{\mathbb{H}^{\prime
}}^{r\Sigma }$ can be then defined by%
\begin{equation*}
\hat{w}_{\mathbb{H}^{\prime }}^{r\Sigma }\left( \mathbb{H}\right) :=\inf
\left\{ \liminf_{p\rightarrow +\infty }\hat{w}^{\Sigma }\left( \mathbb{K}%
_{p},\mathbb{H}^{\prime }\right) \text{ }|\text{ }\mathbb{K}_{p}\in \Pi
,\lim_{p\rightarrow +\infty }d(\mathbb{K}_{p}\ast \mathbb{H}^{\prime },%
\mathbb{H})=0\right\} .
\end{equation*}

As in the case of the bulk relaxed work, both $\hat{w}_{\mathbb{H}^{\prime
}}^{r\Sigma }\left( \mathbb{\cdot }\right) $ and $\hat{w}_{\left( \cdot
\right) }^{r\Sigma }\left( \mathbb{H}\right) $ are state functions.
Moreover, $\hat{w}_{\left( \cdot \right) }^{r\Sigma }\left( \mathbb{\cdot }%
\right) $ is subadditive in the sense that, for histories $\mathbb{H}$, $%
\mathbb{H}^{\prime }$ and $\mathbb{H}^{\prime \prime }$ such that $\hat{w}_{%
\mathbb{H}^{\prime }}^{r\Sigma }\left( \mathbb{H}\right) >-\infty $ and $%
\hat{w}_{\mathbb{H}^{\prime \prime }}^{r\Sigma }\left( \mathbb{H}^{\prime
}\right) >-\infty $, the inequality%
\begin{equation*}
\hat{w}_{\mathbb{H}^{\prime \prime }}^{r\Sigma }\left( \mathbb{H}\right)
\leq \hat{w}_{\mathbb{H}^{\prime \prime }}^{r\Sigma }\left( \mathbb{H}%
^{\prime }\right) +\hat{w}_{\mathbb{H}^{\prime }}^{r\Sigma }\left( \mathbb{H}%
\right)
\end{equation*}%
holds. If $\hat{w}_{\mathbb{H}^{\prime }}^{r\Sigma }\left( \mathbb{H}\right)
>-\infty $, $\hat{w}_{\mathbb{H}^{\prime }}^{r\Sigma }\left( \mathbb{\cdot }%
\right) $ is lower semicontinuous. Under the same hypothesis, for every
process $\mathbb{K}_{p}$ such that $\hat{w}_{\mathbb{H}}^{r\Sigma }\left( 
\mathbb{K}_{p}\ast \mathbb{H}\right) >-\infty $, the surface dissipation
inequality%
\begin{equation*}
\hat{w}_{\mathbb{H}^{\prime }}^{r\Sigma }\left( \mathbb{K}_{p}\ast \mathbb{H}%
\right) -\hat{w}_{\mathbb{H}^{\prime }}^{r\Sigma }\left( \mathbb{H}\right)
\leq w^{\Sigma }\left( \mathbb{K}_{p}\ast \mathbb{H}\right)
\end{equation*}%
is verified.

In summary, all the properties of $w$ and $w^{r}$ hold also for $\hat{w}%
^{\Sigma }$ and $\hat{w}^{r\Sigma }$. The proofs can be constructed in the
same way adopted in analyzing the power in the bulk.

\begin{definition}
Every lower potential of $w^{\Sigma }$ is called a surface free energy.
\end{definition}

The theorem collecting the properties of the free energy in the bulk has its
counterpart for the surface free energy. The proof is essentially the same.

\begin{theorem}
The following assertions hold:
\end{theorem}

\begin{enumerate}
\item \emph{Every free energy }$\phi $\emph{\ satisfies the dissipation
inequality}%
\begin{equation*}
\phi \left( \mathbb{K}\ast \mathbb{H}\right) -\phi \left( \mathbb{H}\right)
<w_{\Sigma }\left( \mathbb{K},\mathbb{H}\right)
\end{equation*}%
\emph{for every }$\mathbb{H}$\emph{\ and every compatible }$\mathbb{K}$\emph{%
. Moreover, every l.s.c. function }$\mathbb{H}\longmapsto \phi \left( 
\mathbb{H}\right) \in \mathbb{R}$\emph{\ that satisfies the dissipation
inequality is a free energy.}

\item \emph{If the dissipation postulate is satisfied, then, for every pair
of histories }$\mathbb{H}$\emph{\ and }$\mathbb{H}^{\prime }$\emph{, the
maps }$\hat{w}_{\mathbb{H}}^{r\Sigma }\left( \cdot \right) $\emph{\ and }$-%
\hat{w}_{\left( \cdot \right) }^{r\Sigma }\left( \mathbb{H}^{\prime }\right) 
$\emph{\ are free energies. Moreover, for every free energy }$\phi $\emph{\
one gets}%
\begin{equation*}
-\hat{w}_{\mathbb{H}}^{r\Sigma }\left( \mathbb{H}^{\prime }\right) \leq \phi
\left( \mathbb{H}\right) -\phi \left( \mathbb{H}^{\prime }\right) \leq \hat{w%
}_{\mathbb{H}}^{r\Sigma }\left( \mathbb{H}^{\prime }\right)
\end{equation*}%
\emph{for arbitrary histories }$\mathbb{H}$\emph{\ and }$\mathbb{H}^{\prime
} $\emph{. In particular, if there is }$\mathbb{H}^{\prime }$\emph{\ and a
family of free energies }$\phi _{s}$\emph{\ such that }$\phi _{s}\left( 
\mathbb{H}^{\prime }\right) =0$\emph{, the maps }$\hat{w}_{\mathbb{H}%
^{\prime }}^{r\Sigma }\left( \cdot \right) $\emph{\ and }$-\hat{w}_{\left( 
\mathbb{\cdot }\right) }^{r\Sigma }\left( \mathbb{H}^{\prime }\right) $\emph{%
\ are the maximum and the minimum free energies in such a family.}

\item \emph{Every free energy is a state function and, for every }$\mathbb{H}
$\emph{, it satisfies the inequality}%
\begin{equation*}
\phi \left( \mathbb{H}\left( 0\right) ^{\dag }\right) \leq \phi \left( 
\mathbb{H}\right) .
\end{equation*}%
\emph{In particular, the restriction of the free energy to constant
histories is given by}%
\begin{equation*}
\phi \left( \mathbb{H}^{\dag }\right) -\phi \left( 0^{\dag }\right) =\frac{1%
}{2}\sum_{\mathbb{A},\mathbb{B}}G_{\mathbb{AB}}\left( \infty \right) \mathbb{%
B}_{\mathbb{H}^{\dag }}\cdot \mathbb{B}_{\mathbb{H}^{\dag }}.
\end{equation*}
\end{enumerate}

$\mathbb{A}$ ranges in $\left\{ \mathbb{T},\mathfrak{z},\mathbb{S}\right\} $
and $\mathbb{B}$ in $\left\{ \mathbb{W},\nu ,\mathbb{N}\right\} $.

\section{The mechanical dissipation inequality and its consequences}

\subsection{Mechanical dissipation inequality}

In isothermal setting, the second law of thermodynamics in the form of
Clausius-Duhem inequality reduces to a mechanical dissipation inequality. In
Lagrangian representation, for any part $\mathfrak{b}$ it reads%
\begin{equation*}
\frac{d}{dt}\Psi \left( \mathfrak{b},y,\nu \right) -\mathcal{P}_{\mathfrak{b}%
}^{ext}\left( \dot{y},\dot{\nu}\right) \leq 0.
\end{equation*}%
The functional $\Psi $ is the overall free energy of $\mathfrak{b}$ along
the fields $y$ and $\nu $: it is the integral over $\mathfrak{b}$\ itself of
the free energy density. If $\mathfrak{b}$ is selected to cross $\Sigma $,
so it is indicated by $\mathfrak{b}_{\Sigma }$, both bulk and surface energy
densities - the ones discussed previously - must be involved. Local forms of
the mechanical dissipation inequality - local in the bulk and along $\Sigma $
- can be obtained by exploiting the arbitrariness of the part considered.
They are reported in summary here, written with reference to the
infinitesimal deformation setting discussed in the earlier sections. The
local form of the mechanical dissipation inequality in the bulk then reads%
\begin{equation*}
\dot{\psi}-\sigma \cdot \dot{W}-z\cdot \dot{\nu}-\mathcal{S}\cdot \dot{N}%
\leq 0,
\end{equation*}%
while the one at points over $\Sigma $ \ is given by%
\begin{equation*}
\dot{\phi}-\mathbb{T\cdot \dot{W}-\mathfrak{z}\cdot }\dot{\nu}-\mathbb{%
S\cdot \dot{N}}\leq 0.
\end{equation*}

\subsection{Complex bodies with instantaneous elastic response}

By borrowing terms from the mechanics of simple bodies, here bodies with
instantaneous elastic response are the ones admitting constitutive
structures of the type%
\begin{equation*}
\psi =\psi \left( H\left( t\right) ,H^{t}\right) ,
\end{equation*}%
\begin{equation*}
\sigma \mathbb{=\sigma }\left( H\left( t\right) ,H^{t}\right) ,
\end{equation*}%
\begin{equation*}
z=z\left( H\left( t\right) ,H^{t}\right) ,
\end{equation*}%
\begin{equation*}
\mathcal{S}=\mathcal{S}\left( H\left( t\right) ,H^{t}\right) ,
\end{equation*}%
in the bulk and 
\begin{equation*}
\phi =\phi \left( \mathbb{H}\left( t\right) ,\mathbb{H}^{t}\right) ,
\end{equation*}%
\begin{equation*}
\mathbb{T}=\mathbb{T}\left( \mathbb{H}\left( t\right) ,\mathbb{H}^{t}\right)
,
\end{equation*}%
\begin{equation*}
\mathfrak{z}=\mathfrak{z}\left( \mathbb{H}\left( t\right) ,\mathbb{H}%
^{t}\right) ,
\end{equation*}%
\begin{equation*}
\mathbb{S}=\mathbb{S}\left( \mathbb{H}\left( t\right) ,\mathbb{H}^{t}\right)
.
\end{equation*}%
The symbols $H$ and $\mathbb{H}$\ summarize the state. Precisely, $H\left(
t\right) =\left( W\left( t\right) ,\nu \left( t\right) ,N\left( t\right)
\right) \in M_{3\times 3}\times \mathbb{R}^{k}\times M_{k\times 3}$ is the
state at the instant $t$ while $H^{t}\in \Gamma $ is the past history of the
state up to the instant $t$ (in the notation used here $H^{t}$ is the graph
of $H$ from $t$ to infinity, namely $H^{t}\left( s\right) =H\left(
t+s\right) $). Analogous meaning can be attributed to $\mathbb{H}\left(
t\right) $ and $\mathbb{H}^{t}$, namely $\mathbb{H}\left( t\right) =\left( 
\mathbb{W}\left( t\right) ,\nu \left( t\right) ,\mathbb{N}\left( t\right)
\right) \in M_{3\times 3}\times \mathbb{R}^{k}\times M_{k\times 3}$ while $%
\mathbb{H}^{t}$ is the past surface history and belongs to $\Gamma $.

In the infinitesimal deformation setting treated here, it is assumed that
the \textbf{free energy density in the bulk} is a \textbf{quadratic form} in
the instantaneous values $W\left( t\right) $, $\nu \left( t\right) $ and $%
N\left( t\right) $. It is also assumed that also the \textbf{surface free
energy density} is a \textbf{quadratic form} in the instantaneous values $%
\mathbb{W}\left( t\right) $, $\nu \left( t\right) $ and $\mathbb{N}\left(
t\right) $.

\subsection{Chain rule}

To exploit the local versions of the mechanical dissipation inequality a
chain rule must be used in evaluating the time derivative of the energy.
Appropriate chain rules have been obtained in \cite{MiWa} and \cite{Day3},
and can be adapted here.

Consider a functional%
\begin{equation*}
\mathcal{F}:M_{3\times 3}\times \mathbb{R}^{k}\times M_{k\times 3}\times
\Gamma \rightarrow \mathbb{R}
\end{equation*}%
defined for every $H\left( t\right) $ (or $\mathbb{H}\left( t\right) $) in $%
M_{3\times 3}\times \mathbb{R}^{k}\times M_{k\times 3}$ and for every $H^{t}$
(or $\mathbb{H}^{t}$) in $\Gamma $ such that $H^{t}\left( s\right) $\ (or $%
\mathbb{H}^{t}\left( s\right) $) is in the open and connected subset $%
\mathcal{U}$ from $M_{3\times 3}\times \mathbb{R}^{k}\times M_{k\times 3}$,
characterized by $\det \left( I+W\right) >0$ (or $\det \left( I+\mathbb{W}%
\right) >0$), for almost $s>0$ ($s$ is the time parametrizing the history
`prior' $t$ - in the representation adopted here $s>t$). Assume that (\emph{i%
}) $\mathcal{F}$ \ is continuously differentiable, (\emph{ii}) the function $%
t\longmapsto H\left( t\right) $ (or $\mathbb{H}\left( t\right) $) with
values in $\mathcal{U}$ has two continuous derivatives $t\longmapsto \dot{H}%
\left( t\right) $ and $t\longmapsto \ddot{H}\left( t\right) $, and (\emph{iii%
}) for every $t$ the past histories $\dot{H}^{t}$ and $\ddot{H}^{t}$ are in $%
\Gamma $. Under these conditions the function $f\left( t\right) :=\mathcal{F}%
\left( H\left( t\right) ,H^{t}\right) $ (alternatively $f\left( t\right) :=%
\mathcal{F}\left( \mathbb{H}\left( t\right) ,\mathbb{H}^{t}\right) $) is
continuously differentiable and its time derivative is%
\begin{equation*}
\dot{f}\left( t\right) =D\mathcal{F}\left( H\left( t\right) ,H^{t}\right)
\cdot \dot{H}\left( t\right) +\delta \mathcal{F}\left( H\left( t\right)
,H^{t}|\dot{H}^{t}\right)
\end{equation*}%
(alternatively $\dot{f}\left( t\right) =D\mathcal{F}\left( \mathbb{H}\left(
t\right) ,\mathbb{H}^{t}\right) \cdot \mathbb{\dot{H}}\left( t\right)
+\delta \mathcal{F(}\mathbb{H}\left( t\right) ,\mathbb{H}^{t}|\mathbb{\dot{H}%
}^{t})$), where $D\mathcal{F}\left( H\left( t\right) ,H^{t}\right) $ is a
continuous functional taking values in $T_{H\left( t\right) }^{\ast }%
\mathcal{U}$ for every fixed $H\left( t\right) $ and $H^{t}$, and $\delta 
\mathcal{F}\left( H\left( t\right) ,H^{t}|K^{t}\right) $ is a continuous
scalar-valued functional depending linearly on $K^{t}$ and defined on the
closed subspace of $\Gamma $ spanned by the functions $K^{t}$ such that $%
H^{t}\left( s\right) +K^{t}\left( s\right) $ is in $\mathcal{U}$ for almost $%
s>0$ (analogous remarks hold also for $D\mathcal{F}\left( \mathbb{H}\left(
t\right) ,\mathbb{H}^{t}\right) $ and $\delta \mathcal{F(}\mathbb{H}\left(
t\right) ,\mathbb{H}^{t}|\mathbb{\dot{H}}^{t})$). Fixed $K^{t}$, an
appropriate technical assumption is that $\delta \mathcal{F}\left( H\left(
t\right) ,H^{t}|K^{t}\right) $ is continuous in $\left( H\left( t\right)
,H^{t}\right) $. The proof of this chain rule can be found in \cite{Day3}.

\subsection{Consequences of the mechanical dissipation inequality}

As it is well known, to exploit the local version of the mechanical
dissipation inequality one should have the possibility to select at will the
instantaneous rate $\dot{H}\left( t\right) $ of the state. If this point is
straightforward in the mechanics of (simple or complex) bodies without
memory effects, some additional problems appear in presence of memory
effects, due to the dependence of the interaction measures on the whole
history of the state variables. The technique discussed in \cite{Day3} to
avoid these difficulties can be adapted here and is summarized in the
following paragraphs, then it is applied to the case of complex bodies.

For instrumental reasons, it is useful to introduce a $C^{2}$ function $f:%
\mathbb{R}^{+}\mathbb{\rightarrow R}$ such that $f\left( s\right) =0$ for $%
\left\vert s\right\vert \geq 1$, $f\left( 0\right) =0$, $\dot{f}\left(
0\right) =1$. By using $f$, for any $\alpha \in \mathbb{R}^{+}$\ and a fixed 
$t\in \mathbb{R}^{+}$ one may define (see [Day], p. 91) \emph{varied}
histories%
\begin{equation*}
H_{\alpha }\left( s\right) :=H\left( s\right) +\alpha f\left( \frac{s-t}{%
\alpha }\right) \left( \mathcal{H}-\dot{H}\left( t\right) \right) ,
\end{equation*}%
where $\mathcal{H}$ is a generic element from $T_{H\left( t\right) }\mathcal{%
U}$, namely $\mathcal{H}$ is a triple $\left( \mathcal{V},\upsilon ,\Upsilon
\right) $ of virtual rates of $W$, $\nu $ and $N$. Essential properties of $%
H_{\alpha }\left( t\right) $ - in a sense the properties that suggest the
definition of $H_{\alpha }\left( t\right) $ itself - are (\emph{i}) $%
H_{\alpha }\left( t\right) =H\left( t\right) $, (\emph{ii}) $\dot{H}_{\alpha
}\left( t\right) =\mathcal{H}$, (\emph{iii}) $H_{\alpha }\left( s\right)
=H\left( s\right) $ for every $s\leq t-\alpha $ and $s\geq t+\alpha $ and (%
\emph{iv}) for $\alpha $ sufficiently small $H_{\alpha }\left( \cdot \right) 
$ meets the hypotheses of the chain rule and both $H_{\alpha }\left( \cdot
\right) $ and $\dot{H}_{\alpha }\left( \cdot \right) $ converge in norm
respectively to $H\left( \cdot \right) $ and $\dot{H}\left( \cdot \right) $
as $\alpha \rightarrow 0$.

A similar definition can be adopted for the surface history $\mathbb{H}%
_{\alpha }\left( s\right) $ so that one gets%
\begin{equation*}
\mathbb{H}_{\alpha }\left( s\right) :=\mathbb{H}\left( s\right) +\alpha
f\left( \frac{s-t}{\alpha }\right) \left( \mathfrak{H}-\mathbb{\dot{H}}%
\left( t\right) \right) ,
\end{equation*}%
where, now, $\mathfrak{H}$ is a generic element from $T_{\mathbb{H}\left(
t\right) }\mathcal{U}$, namely $\mathfrak{H}$\ is a triple $\left( \mathfrak{%
V},\upsilon ,\mathfrak{G}\right) $ of virtual rates of $\mathbb{W}$, $\nu $
and $\mathbb{N}$.

By making use of the chain rule and substituting $H\left( s\right) $ with $%
H_{\alpha }\left( s\right) $, from the local mechanical dissipation
inequality in the bulk one gets%
\begin{equation*}
(\partial _{W\left( t\right) }\psi \left( H\left( t\right) ,H^{t}\right)
-\sigma \left( H\left( t\right) ,H^{t}\right) )\cdot \mathcal{V}+(\partial
_{\nu \left( t\right) }\psi \left( H\left( t\right) ,H^{t}\right) -z\left(
H\left( t\right) ,H^{t}\right) )\cdot \upsilon +
\end{equation*}%
\begin{equation*}
+(\partial _{N\left( t\right) }\psi \left( H\left( t\right) ,H^{t}\right) -%
\mathcal{S}\left( H\left( t\right) ,H^{t}\right) )\cdot \Upsilon +\delta
\psi \left( H\left( t\right) ,H^{t}|\dot{H}^{t}\right) \leq 0
\end{equation*}%
as $\alpha \rightarrow 0$, an inequality holding for all choices of the
triple $\left( \mathcal{V},\upsilon ,\Upsilon \right) $. The arbitrariness
of $\left( \mathcal{V},\upsilon ,\Upsilon \right) $ implies that in the bulk%
\begin{equation*}
\sigma \left( H\left( t\right) ,H^{t}\right) =\partial _{W\left( t\right)
}\psi \left( H\left( t\right) ,H^{t}\right) ,
\end{equation*}%
\begin{equation*}
z\left( H\left( t\right) ,H^{t}\right) =\partial _{\nu \left( t\right) }\psi
\left( H\left( t\right) ,H^{t}\right) ,
\end{equation*}%
\begin{equation*}
\mathcal{S}\left( H\left( t\right) ,H^{t}\right) =\partial _{N\left(
t\right) }\psi \left( H\left( t\right) ,H^{t}\right) ,
\end{equation*}%
\begin{equation*}
\delta \psi \left( H\left( t\right) ,H^{t}|\dot{H}^{t}\right) \leq 0.
\end{equation*}

An analogous result hold along the surface $\Sigma $ where, locally one gets%
\begin{equation*}
(\partial _{\mathbb{W}\left( t\right) }\phi \left( \mathbb{H}\left( t\right)
,\mathbb{H}^{t}\right) -\mathbb{T}\left( \mathbb{H}\left( t\right) ,\mathbb{H%
}^{t}\right) )\cdot \mathfrak{V}+(\partial _{\nu \left( t\right) }\phi
\left( \mathbb{H}\left( t\right) ,\mathbb{H}^{t}\right) -\mathfrak{z}\left( 
\mathbb{H}\left( t\right) ,\mathbb{H}^{t}\right) )\cdot \upsilon +
\end{equation*}%
\begin{equation*}
+(\partial _{\mathbb{N}\left( t\right) }\phi \left( \mathbb{H}\left(
t\right) ,\mathbb{H}^{t}\right) -\mathcal{S}\left( \mathbb{H}\left( t\right)
,\mathbb{H}^{t}\right) )\cdot \mathfrak{G}+\delta \phi \left( \mathbb{H}%
\left( t\right) ,\mathbb{H}^{t}|\mathbb{\dot{H}}^{t}\right) \leq 0
\end{equation*}%
as $\alpha \rightarrow 0$, an inequality holding for all choices of the
triple $\left( \mathfrak{V},\upsilon ,\mathfrak{G}\right) $. The
arbitrariness of $\left( \mathfrak{V},\upsilon ,\mathfrak{G}\right) $
implies that, along the surface,%
\begin{equation*}
\mathbb{T}\left( \mathbb{H}\left( t\right) ,\mathbb{H}^{t}\right) =\partial
_{\mathbb{W}\left( t\right) }\phi \left( \mathbb{H}\left( t\right) ,\mathbb{H%
}^{t}\right) ,
\end{equation*}%
\begin{equation*}
\mathfrak{z}\left( \mathbb{H}\left( t\right) ,\mathbb{H}^{t}\right)
=\partial _{\nu \left( t\right) }\phi \left( \mathbb{H}\left( t\right) ,%
\mathbb{H}^{t}\right) ,
\end{equation*}%
\begin{equation*}
\mathcal{S}\left( \mathbb{H}\left( t\right) ,\mathbb{H}^{t}\right) =\partial
_{\mathbb{N}\left( t\right) }\phi \left( \mathbb{H}\left( t\right) ,\mathbb{H%
}^{t}\right) ,
\end{equation*}%
\begin{equation*}
\delta \phi \left( \mathbb{H}\left( t\right) ,\mathbb{H}^{t}|\mathbb{\dot{H}}%
^{t}\right) \leq 0.
\end{equation*}

\textbf{Acknowledgements}. The support of the GNFM-CNR is acknowledged. This
work has been developed within the programs of the research group in
"Theoretical Mechanics" of the "Centro di Ricerca Matematica Ennio De
Giorgi" of the Scuola Normale Superiore at Pisa.

\end{document}